\documentclass[10pt,a4paper]{article}
\usepackage{authblk}
\usepackage[colorlinks,linkcolor=blue,anchorcolor=red,citecolor=red,pdfstartview=FitH]{hyperref}
\usepackage{graphics,graphicx}
\usepackage{mathtools}
\usepackage{amsmath,amssymb,amsthm,accents,amsfonts}
\usepackage{yhmath}
\usepackage{color}
\usepackage{cite}
\usepackage{dsfont}
\usepackage{bm}
\usepackage{rotating}
\usepackage{type1cm}
\usepackage{float}
\usepackage{tikz}
\usetikzlibrary{arrows,chains,matrix,positioning,scopes}
\makeatletter
\tikzset{join/.code=\tikzset{after node path={%
\ifx\tikzchainprevious\pgfutil@empty\else(\tikzchainprevious)%
edge[every join]#1(\tikzchaincurrent)\fi}}}
\makeatother
\tikzset{>=stealth',every on chain/.append style={join},
         every join/.style={->}}
\tikzstyle{labeled}=[execute at begin node=$\scriptstyle,
   execute at end node=$]

\setlength{\oddsidemargin}{-0.25in}
\setlength{\textwidth}{7in}
\setlength{\topmargin}{-.25in}
\setlength{\textheight}{9in}

\def \b#1{\overline{#1}}
\def \t#1{\widetilde{#1}}
\def \h#1{\widehat{#1}}

\newtheorem{prop}{Proposition}[section]

\setcounter{footnote}{1}

\numberwithin{equation}{section}
\allowdisplaybreaks

\begin{document}

\title{Binary Darboux transformations for discrete modified Boussinesq equation}
\author[1]{Ying Shi\footnote{E-mail:yingshi@zust.edu.cn}}
\author[2]{Junxiao Zhao}
\affil[1]{\normalsize School of Science, Zhejiang University of Science and Technology, Hangzhou 310023, P.R. China}
\affil[2]{\normalsize School of Mathematics and Statistics, University of Glasgow, University Place, Glasgow G12 8QQ}

\date{}
\maketitle

\begin{abstract}
We obtain the well-known discrete modified Boussinesq equation in two-component form as well as its Lax pair in $3\times3$ matrix form through a 3-periodic reduction technique on the Hirota-Miwa equation and its Lax pair. We describe how Darboux transformations and binary Darboux transformations can be constructed for this two-component discrete integrable equation. These transformations are then used to obtain classes of explicit solutions in the form of Casorati- and Gramm-type determinants. $N$-soliton solutions of the discrete modified Boussinesq equation are discussed as well when taking the vacuum potentials as constants.
\end{abstract}

\section{Introduction}
Due to rich algebraic structure and crucial contributions to other fields of modern mathematics and physics, discrete integrable equations in a sense are playing increasingly important role in the theory of integrable systems, see e.g.\cite{HJN-2016}. The property named multidimensional consistency is one of its key feature \cite{NS-1997, DS-1997, NW-2001}. This property was used efficiently as a classifying property \cite{ABS-2003}.

Multi-component discrete integrable equations form an important class of the discrete integrable systems. But the work on the multi-component lattice equations \cite{NP-1992, MK-2010, XN-2011, H-2011, ZZN-2012} is still rather sporadic when compared with that of scalar lattice equations. A classic example of multi-component integrable lattice equations is the lattice Gel'fand-Dikii hierarchy \cite{NP-1992}. In this lattice hierarchy, the first higher-rank case is the discrete Boussinesq equation in three-component form on a quadrilateral or equivalently in 9-point scalar form on a $3\times3$ stencil on the discrete two-dimensional plane. By using the multidimensional consistency, some results on the classification of the Boussinesq-type multi-component lattice equations are given in \cite{H-2011} (see also \cite{ZZN-2012} from the viewpoint of the direct linearization approach). Furthermore, in a recent paper \cite{FX-2017}, another algebraic classification scheme of discrete integrable systems (including the multi-component case) is also shown.

When compared with scalar lattice equations, approach of searching exact solutions of multi-component discrete integrable equations is still not so rich. In recent studies, the Hirota bilinear approach (see \cite{HZ-2010}) and a reduction technique (see \cite{MK-2010}) both work well in dealing with finding soliton solutions of the multi-component discrete Boussinesq-type equations ( e.g., the discrete Boussinesq (three-component) and modified Boussinesq (two-component) equations). Moreover, another systematic and effective approach is the direct linearization method \cite{NP-1992, ZZN-2012}. But this approach is difficult to be mastered, since it employs an integral equation (not easy to construct) and the knowledge of infinite-component matrix theory. Therefore, our motivation is to find an algebraic and straightforward  method for solving the multi-component discrete integrable equations.

Darboux transformations have been proven to be an powerful and straightforward approach for constructing exact solutions of (both continuous and discrete) integrable systems\cite{M-1991, WLG-1997, N-1997}. Moreover, this approach has close connections with other aspects of the study of integrable (infinite-dimensional) nonlinear systems, e.g., the Sato theory within the language of free fermion operators \cite{WTLS-1998}. The main idea of this approach is to find appropriate gauge transformations keeping the linear problems associated with integrable systems being invariant, so that the relationships between the new and the original potentials (eigenfunctions as well) can be built. The advantage of Darboux transformations and binary Darboux transformations lies in not only that the solutions are fruitful (not just soliton solutions) and expressible in terms of determinants (the Wronskian, Casoratian and Grammian determinants), but also that the iterative algorithms in them are purely algebraic.
The aim of this paper, therefore, is to investigate this approach, especially the binary Darboux transformations, for solving the multi-component discrete integrable equations (e.g., the two-component discrete modified Boussinesq equation).  Binary Darboux transformation can be seen as a compound type of the Darboux transformations. It involves two linear problems associated with the same integrable systems.  The advantage of binary Darboux transformation is that it gives rise to solutions in the form of Grammian-determinants type. In recent papers \cite{SNZ-2014, SNZ-2017}, the Darboux transformations and binary Darboux transformations in the light of some reduction technique was applied effectively to the discrete integrable equations.  This series work enables us to study the Darboux and binary Darboux transformations for the multi-component discrete integrable equations from this viewpoint in the current paper.

This paper is organized as follows. In Section 2, we explain a 3-periodic reduction technique on the Hirota-Miwa equation and its Lax pair.  It involves some simple 3-periodic constraints that can be imposed on the $\tau$-functions in the Hirota-Miwa equation (a bilinear equation).  The two-component discrete modified Boussinesq equation and its Lax pair in matrix form are obtained under this reduction technique (corresponding to the results in \cite{FX-2017}, i.e., expressions (3.8a) and (3.8b), by a classification scheme).  The reduction technique here sheds new light on the constructions of Darboux transformations and binary Darboux transformations. In Section 3, through the 3-periodic reduction conditions, we investigate the Darboux transformations of the discrete modified Boussinesq equation. Particularly, we find out that the 3-periodic property is preserved under the Darboux transformations. In Section 4, we describe in detail how binary Darboux transformations can be constructed for the discrete modified Boussinesq equation. In Section 5,  by choosing the seed solution of the discrete modified Boussinesq equation as constants $(1, 1)$, we present explicit examples of the classes of solutions that may be obtained by means of the Darboux transformations and binary Darboux transformations.

\section{From the Hirota-Miwa equation to the discrete modified Boussinesq equation: 3-periodic reduction}\label{sec-2}

The Hirota-Miwa equation \cite{H-1981, M-1982} is the three-dimensional discrete integrable system
\begin{equation}\label{H-M-1}
(a_1-a_2)\tau_{12}\tau_3+(a_2-a_3)\tau_{23}\tau_1+(a_3-a_1)\tau_{31}\tau_2=0,
\end{equation}
where lattice parameters $a_k$ are constants, $k=1,2,3$ and $\tau=\tau(n_1,n_2,n_3)$.  The notations we adopted here and later are as follows, within forward shift operators $T_{n_1}$, $T_{n_2}$ and $T_{n_3}$:
\begin{align*}
\tau_1:=T_{n_1}(\tau)=\tau(n_1+1, n_2,n_3), ~\tau_2:=T_{n_2}(\tau)=\tau(n_1, n_2+1,n_3), ~\tau_3:=T_{n_3}(\tau)=\tau(n_1, n_2,n_3+1).
\end{align*}
The compositions of the shift operators $T_{n_1}$, $T_{n_2}$ and $T_{n_3}$ are denoted by the combinations of the subscripts 1, 2 and 3, e.g., $\tau_{12}:=T_{n_1}T_{n_2}(\tau)=\tau(n_1+1, n_2+1, n_3)$.

The Hirota-Miwa equation \eqref{H-M-1} arises as the compatibility conditions of the linear system
\begin{equation}\label{H-M-LP-1}
\phi_i-\phi_j=(a_i-a_j)u^{ij}\phi, \quad  \mathrm{with}\quad u^{ij}=\frac{\tau_{ij}\tau}{\tau_i\tau_j},\quad (1\leq i<j\leq3),
\end{equation}
where for $\phi=\phi(n_1,n_2,n_3)$ each subscript $i$ denotes a forward shift in the corresponding discrete variable $n_i$. For example, $\phi_{1}\!=\!T_{_{n_1}}\!(\phi)\!=\!\phi(n_1+1,n_2,n_3)$.

For the Hirota-Miwa equation \eqref{H-M-1}, in  \cite{N-1997,SNZ-2014}, its one soliton solution in discrete exponential function form is expressed as follows
\begin{equation}\label{tau-solution}
  \tau(n_1,n_2,n_3)=1+\alpha\prod_{i=1}^{3}\left(\frac{a_i+p^\prime}{a_i+p}\right)^{n_i}.
\end{equation}
Introduce $f(n_1,n_2,n_3)$, $g(n_1,n_2,n_3)$ and $h(n_1,n_2,n_3)$ through the following  transformations
\begin{equation}\label{f-g-h-condition}
 \tau=f=T_{n_3}^3(f),\quad h=T_{n_3}^2(f),\quad g=T_{n_3}(f).
\end{equation}
Note here that the reduction condition \eqref{f-g-h-condition} gives  $a_3=0$, $p^\prime=\omega p$ with $\omega^3=1$ for $\omega\neq1$, in \eqref{tau-solution}, and
\begin{subequations}\label{g-h-condition}
\begin{align}
g=T_{n_3}^3(g),\quad f=T_{n_3}^2(g),\quad h=T_{n_3}(g),\\
h=T_{n_3}^3(h),\quad g=T_{n_3}^2(h),\quad f=T_{n_3}(h),
\end{align}
\end{subequations}
where it is easy to find the cyclical property of $f \!\to\!g\!\to\!h\!\to\!f$  with respect to $n_3$.  By applying the reduction condition \eqref{f-g-h-condition} to the Hirota-Miwa equation \eqref{H-M-1}, together with parameter reduction $a_3=0$, we get
\begin{subequations}\label{H-M-2}
\begin{align}
(a_1-a_2)f_{12}g+a_2g_2f_1-a_1g_1f_2=0,\label{Bilinear-f-g-h-1}\\
(a_1-a_2)g_{12}h+a_2h_2g_1-a_1h_1g_2=0,\label{Bilinear-f-g-h-2}\\
(a_1-a_2)h_{12}f+a_2f_2h_1-a_1f_1h_2=0.\label{Bilinear-f-g-h-3}
\end{align}
\end{subequations}
The way to get the equations \eqref{Bilinear-f-g-h-2} and \eqref{Bilinear-f-g-h-3} is applying the cyclical property $f \!\to\!g\!\to\!h\!\to\!f$ to the equation \eqref{Bilinear-f-g-h-1}, or equivalently taking shift operators $T_{n_3}$ and $T_{n_3}^2$ to act on \eqref{Bilinear-f-g-h-1} and using the reduction condition \eqref{f-g-h-condition} and \eqref{g-h-condition}.

Define three functions (potentials)
\begin{equation}\label{potential}
  v(n_1,n_2,n_3)=\frac{h}{f},\quad w(n_1, n_2, n_3)=\frac{g}{f},\quad u(n_1,n_2,n_3)=\frac{f_{12}f}{f_1f_{2}}.
\end{equation}
By substituting \eqref{potential} into \eqref{H-M-2}, we get
\begin{subequations}\label{d-Miura}
\begin{align}
(a_1-a_2)uw&=a_1w_{1}-a_2w_2, \label{d-Miura-1}\\
(a_1-a_2)uw_{12}v&=a_1w_2v_1-a_2w_{1}v_2, \label{d-Miura-2}\\
(a_1-a_2)uv_{12}&=a_1v_2-a_2v_{1}. \label{d-Miura-3}
\end{align}
\end{subequations}
Eliminating $u$ in \eqref{d-Miura} gives
\begin{equation}\label{d-p-MBSQ}
\frac{a_1w_{1}-a_2w_2}{w}=\frac{a_1w_2v_1-a_2w_{1}v_2}{w_{12}v}=\frac{a_1v_2-a_2v_{1}}{v_{12}},
\end{equation}
which is the well-known two-component discrete modified Boussinesq equation given in \cite{N-1997} by the direct linearization method (see also in \cite{FX-2017}, i.e., expressions (3.8a) and (3.8b), by an algebraic approach). It is also given in \cite{H-2011} as an special case of the C-3 equation.  Moreover,  \eqref{H-M-2} can also be seen as a bilinearization of the discrete modified Boussinesq equation \eqref{d-p-MBSQ}.

With the 3-periodic property of $f$, $g$ and $h$, we have the formulae on the potentials $v$ and $w$ as follows
\begin{align*}
T_{n_3}^3(v)=v,~T_{n_3}^2(v)=wv^{-1}, ~T_{n_3}(v)=w^{-1},\\
T_{n_3}^3(w)=w,~T_{n_3}^2(w)=v^{-1}, ~T_{n_3}(w)=vw^{-1}.
\end{align*}
So the potentials $v$ and $w$ also satisfy the 3-periodic property in the virtual variable $n_3$.

Next, starting by the linear system of the Hirota-Miwa equation \eqref{H-M-LP-1}, we show the way of discovering the linear system in matrix form of the discrete modified Boussinesq equation \eqref{d-p-MBSQ} through a $3$-periodic reduction technique.

Introduce eigenfunctions  $\phi(n_1,n_2,n_3)$, $\b \phi(n_1, n_2, n_3)$ and $\b{\b{\phi}}(n_1, n_2, n_3)$,  and impose a 3-periodic condition  as below
\begin{equation}\label{phi-condition}
\phi=\lambda^{-3}T_{n_3}^3(\phi),\quad \b{\b\phi}=\lambda^{-2} T_{n_3}^2(\phi), \quad \b{\phi}=\lambda^{-1} T_{n_3}(\phi),
\end{equation}
on the eigenfunction $\phi(n_1,n_2,n_3)$ in the linear system \eqref{H-M-LP-1},  where the parameter $\lambda$ serves as the spectral parameter.
From \eqref{phi-condition}, we have
\begin{subequations}\label{bphi-condition}
\begin{align}
\b\phi=\lambda^{-3}T_{n_3}^3(\b\phi),\quad \phi=\lambda^{-2} T_{n_3}^2(\b\phi), \quad \b{\b\phi}=\lambda^{-1} T_{n_3}(\b\phi),\\
\b{\b\phi}=\lambda^{-3}T_{n_3}^3(\b{\b\phi}),\quad \b\phi=\lambda^{-2} T_{n_3}^2(\b{\b\phi}), \quad \phi=\lambda^{-1} T_{n_3}(\b{\b\phi}),
\end{align}
\end{subequations}
where  it means the cyclical property between $\phi \!\to\!\b{\phi}\!\to\!\b{ \b\phi}\!\to\!\phi$ with respect to $n_3$.

By applying the reduction conditions \eqref{f-g-h-condition}, \eqref{g-h-condition}, \eqref{phi-condition} and \eqref{bphi-condition}, together with $a_3=0$ and the potentials' transformation \eqref{potential}, to the linear system \eqref{H-M-LP-1}, we get
\begin{subequations}\label{H-M-LP2}
\begin{align}
\phi_{1}-\phi_{2}&=(a_1-a_2)u\phi,\label{H-M-LP2-1}\\
\phi_{2}-\lambda\b\phi&=a_2w^{-1}w_2\phi, \label{H-M-LP2-2}\\
\phi_{1}-\lambda\b\phi&=a_1w^{-1}w_1\phi.\label{H-M-LP2-3}
\end{align}
\end{subequations}
Then by using the cyclical property of $f \!\to\!g\!\to\!h\!\to\!f$ in \eqref{f-g-h-condition} and \eqref{g-h-condition},  and $\phi \!\to\!\b{\phi}\!\to\!\b{\b\phi}\!\to\!\phi$ in \eqref{phi-condition} and \eqref{bphi-condition}, we get
\begin{subequations}\label{H-M-LP3}
\begin{align}
\b\phi_{1}-\b\phi_{2}&=(a_1-a_2)w_{12}ww^{-1}_1w^{-1}_2u\b\phi,\label{H-M-LP3-1}\\
\b\phi_{2}-\lambda\b{\b\phi}&=a_2v_2ww^{-1}_2v^{-1}\b\phi, \label{H-M-LP3-2}\\
\b\phi_{1}-\lambda\b{\b\phi}&=a_1v_1ww^{-1}_1v^{-1}\b\phi;\label{H-M-LP3-3}
\end{align}
\end{subequations}
and
\begin{subequations}\label{H-M-LP4}
\begin{align}
\b{\b\phi}_{1}-\b{\b\phi}_{2}&=(a_1-a_2)v_{12}vv^{-1}_1v^{-1}_2u\b{\b\phi},\label{H-M-LP4-1}\\
\b{\b\phi}_{2}-\lambda\phi&=a_2vv^{-1}_2\b{\b\phi}, \label{H-M-LP4-2}\\
\b{\b\phi}_{1}-\lambda\phi&=a_1vv^{-1}_1\b{\b\phi}.\label{H-M-LP4-3}
\end{align}
\end{subequations}
Through the expressions in \eqref{d-Miura}, the equations \eqref{H-M-LP2-1}, \eqref{H-M-LP3-1}  and \eqref{H-M-LP4-1} can be derived by \eqref{H-M-LP2-2} and \eqref{H-M-LP2-3}, \eqref{H-M-LP3-2} and \eqref{H-M-LP3-3}, and  \eqref{H-M-LP4-2} and \eqref{H-M-LP4-3} respectively.

Define a vector eigenfunction $\bm\Phi=(\phi,\b\phi, \b{\b\phi})^T$, then the linear equations \eqref{H-M-LP2-3}, \eqref{H-M-LP3-3}, \eqref{H-M-LP4-3} and \eqref{H-M-LP2-2}, \eqref{H-M-LP3-2}, \eqref{H-M-LP4-2} can be  respectively written in matrix form as below
\begin{subequations}\label{matrix-LP}
\begin{align}
&\bm\Phi_1=\bm L\bm\Phi,\label{matrix-LP-1}\\
&\bm\Phi_2=\bm M\bm\Phi, \label{matrix-LP-2}
\end{align}
where
\begin{gather}
\bm L=\left(
\begin{array}{ccc}
 a_1 w_1 w^{-1} & \lambda & 0 \\
0 & a_1 v_1 v^{-1}w w_1^{-1} & \lambda \\
\lambda & 0 & a_1 v v_1^{-1} \\
\end{array}
\right)
,~~~~
\bm M=\left(
\begin{array}{ccc}
 a_2 w_2 w^{-1} & \lambda & 0 \\
0 & a_2 vv^{-1}_2 ww_2^{-1}& \lambda \\
\lambda & 0 & a_2 v v_2^{-1} \\
\end{array}
\right)
.\nonumber
\end{gather}
\end{subequations}
One then finds that
\begin{align*}
0&=\bm\Phi_{12}-\bm\Phi_{21}=(\bm L_2\bm M-\bm M_1\bm L)\bm\Phi\nonumber\\
&=\!\lambda\!\!\left(
    \begin{array}{ccc}
      0&\frac{(a_1\!w_1-a_2\!w_2)\!w_{12}v-(a_1\!w_2v_1-a_2\!w_1\!v_2)\!w}{w_1w_2v}&0\\
      0& 0&\frac{(a_1\!w_2\!v_1-a_2\!w_1\!v_2)v_{12}-(a_1\!v_2-a_2\!v_1)w_{12}v}{w_{12}v_1v_2}\\
      \frac{(a_1\!v_2-a_2\!v_{1})w-(a_1\!w_1-a_2\!w_2)v_{12}}{wv_{12}}&0&0 \\
    \end{array}
  \right)\!\!\bm\Phi.
\end{align*}
So the compatibility condition of the above linear system \eqref{matrix-LP} in eigenfunction $\bm\Phi$ is that $v$ and $w$ obey the discrete modified Boussinesq equation \eqref{d-p-MBSQ}.

\noindent\emph{\textbf{Remark.}} The first-order linear system in matrix form \eqref{matrix-LP} can be equivalently written in third-order linear system in scalar form as follows
\begin{subequations}
\begin{align*}
\phi_{111}-a_1(vw_{11}+v_{11}w_1+v_1w_{111})v_1^{-1}w_{11}^{-1}\phi_{11}+a_1^2(vw_{11}+v_{11}w_1+v_1w)v_1^{-1}w_1^{-1}\phi_1-a_1^3\phi=\lambda^3\phi,\\
\phi_{222}-a_2(vw_{22}+v_{22}w_2+v_2w_{222})v_2^{-1}w_{22}^{-1}\phi_{22}+a_2^2(vw_{22}+v_{22}w_2+v_2w)v_2^{-1}w_2^{-1}\phi_2-a_2^3\phi=\lambda^3\phi.
\end{align*}
\end{subequations}
\section{Darboux transformations}\label{sec-3}

We will now derive the Darboux transformations for the linear system \eqref{matrix-LP} which are compatible ($\bm\Phi_{12}=\bm\Phi_{21}$), if and only if the discrete modified Boussinesq equation \eqref{d-p-MBSQ} is satisfied. To do this we need to define the Casoratian determinant.  Let $\bm\theta=({\theta^{1}}(n_1,n_2,n_3), {\theta^{2}}(n_1,n_2,n_3), \dots, {\theta^{N}}(n_1,n_2,n_3))^T$ be an $N$-vector solution of the linear system \eqref{matrix-LP}.  The Casoratian determinant (with forward-shifts)
\begin{align*}
C(\theta^{1},\theta^{2}, \dots, \theta^{N})\!=\!|\bm\theta,T_{_{n_i}}(\bm\theta), \dots, T_{_{n_i}}^k(\bm\theta),\dots, T_{_{n_i}}^{N-1}(\bm\theta)| ~~~~(1\leq i \leq 3),
\end{align*}
may also be unambiguously defined in the following notation as
\begin{align*}
C_{_{[i]}}(\theta^{1},\theta^{2}, \dots, \theta^{N})\!=\!|\bm\theta(0),\bm\theta(1), \dots, \bm\theta(k), \dots, \bm\theta(N-1)| ~~~~(1\leq i \leq 3),
\end{align*}
where $\bm\theta(k)$ denotes the $N$-vector $({\theta^{1}}(n_1,n_2,n_3), {\theta^{2}}(n_1,n_2,n_3), \dots, {\theta^{N}}(n_1,n_2,n_3))^T$ is subject to the shift $n_i \rightarrow n_i+k$,  $0\leq k \leq N-1$, where $i=1, 2$ or $3$, the same value being taken for $i$ in each column in the determinant.

Let $(v, w)$ be a solution of the discrete modified Boussinesq equation \eqref{d-p-MBSQ} and $\bm\Phi=(\phi, \b\phi, \b{\b\phi})^T$ be a  vector solution of its Lax pair \eqref{matrix-LP}.  The basic Darboux transformation of the equation \eqref{d-p-MBSQ} is given as below.
\begin{prop}\label{prop1-dpmBSQ}
Let $\bm\Theta=(\theta,\b\theta, \b{\b\theta})^T$ having the 3-periodic property  $\theta=\mu^{-3}T_{_{n_3}}^3(\theta), \b{\b\theta}=\mu^{-2}T_{n_3}^2(\theta), \b\theta=\mu^{-1}T_{_{n_3}}(\theta)$ be a non-zero vector solution of the linear system \eqref{matrix-LP} by taking $\lambda=\mu$ for some $(v, w)$, then
\begin{subequations}\label{dpmBSQ-DT-1-1}
\begin{align}
\mathrm {DT}^{\bm\Theta}:
&~~\phi\rightarrow\t\phi=\frac{C_{_{[3]}}(\theta, \phi)} {\theta},~~~~\b\phi\rightarrow\t{\b\phi}=\frac{C_{_{[3]}}(\b\theta, \b\phi)} {\b\theta},~~~~\b{\b\phi}\rightarrow\t{\b{\b\phi}}=\frac{C_{_{[3]}}(\b{\b\theta}, \b{\b\phi})} {\b{\b\theta}},\label{dpmBSQ-DT-1-1-1}\\
&~~v\rightarrow \t v=\frac{T_{_{n_3}}^2\!(\theta)}{\theta}v
\quad \quad w\rightarrow \t w=\frac{T_{_{n_3}}\!(\theta)}{\theta}w
\end{align}
\end{subequations}
leaves \eqref{matrix-LP} invariant.
\end{prop}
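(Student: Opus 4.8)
The first step is to strip the determinant notation from \eqref{dpmBSQ-DT-1-1-1} and express the transformation purely algebraically. Writing out the two-column Casoratian, $C_{[3]}(\theta,\phi)=\theta\,T_{n_3}(\phi)-\phi\,T_{n_3}(\theta)$, and invoking the $3$-periodic reductions \eqref{phi-condition}--\eqref{bphi-condition}, which give $T_{n_3}(\phi)=\lambda\b\phi$, $T_{n_3}(\b\phi)=\lambda\b{\b\phi}$, $T_{n_3}(\b{\b\phi})=\lambda\phi$ and the same relations for $\theta$ with $\lambda$ replaced by $\mu$, the transformed eigenfunctions take the explicit form
\[
\t\phi=\lambda\b\phi-\frac{\mu\b\theta}{\theta}\phi,\qquad \t{\b\phi}=\lambda\b{\b\phi}-\frac{\mu\b{\b\theta}}{\b\theta}\b\phi,\qquad \t{\b{\b\phi}}=\lambda\phi-\frac{\mu\theta}{\b{\b\theta}}\b{\b\phi}.
\]
Equivalently $\t{\bm\Phi}=\bm D\bm\Phi$ with the explicit Darboux matrix
\[
\bm D=\begin{pmatrix} -\mu\b\theta\theta^{-1} & \lambda & 0\\[2pt] 0 & -\mu\b{\b\theta}\b\theta^{-1} & \lambda\\[2pt] \lambda & 0 & -\mu\theta\b{\b\theta}^{-1}\end{pmatrix},
\]
which has exactly the cyclic shape of $\bm L$ and $\bm M$; the same reductions turn the potential rules into $\t v=\mu^2\b{\b\theta}\theta^{-1}v$ and $\t w=\mu\b\theta\theta^{-1}w$.

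With $\t{\bm\Phi}=\bm D\bm\Phi$ in hand, the assertion that \eqref{matrix-LP} is preserved follows from the two intertwining (gauge-covariance) identities $\bm D_1\bm L=\t{\bm L}\bm D$ and $\bm D_2\bm M=\t{\bm M}\bm D$, where $\bm D_1=T_{n_1}(\bm D)$, $\bm D_2=T_{n_2}(\bm D)$ and $\t{\bm L},\t{\bm M}$ denote $\bm L,\bm M$ with $(v,w)$ replaced by $(\t v,\t w)$: indeed $\t{\bm\Phi}_1=\bm D_1\bm\Phi_1=\bm D_1\bm L\bm\Phi$, so $\t{\bm\Phi}_1=\t{\bm L}\t{\bm\Phi}$ holds precisely when $\bm D_1\bm L=\t{\bm L}\bm D$, and similarly in the $n_2$-direction. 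Both sides are polynomials of degree two in $\lambda$, so I would verify each identity by comparing the coefficients of $\lambda^0,\lambda^1,\lambda^2$, which is the same as checking the six scalar relations \eqref{H-M-LP2-3}, \eqref{H-M-LP3-3}, \eqref{H-M-LP4-3} and \eqref{H-M-LP2-2}, \eqref{H-M-LP3-2}, \eqref{H-M-LP4-2} for the tilded data. Since $\bm L$ and $\bm M$ differ only through $a_1\leftrightarrow a_2$ and $T_{n_1}\leftrightarrow T_{n_2}$, it is enough to treat the three $\bm L$-equations and read off the $\bm M$-equations by this symmetry.

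The verification runs on the fact that $\bm\Theta$ solves \eqref{matrix-LP} at $\lambda=\mu$, i.e. $\theta_1=\mu\b\theta+a_1w^{-1}w_1\theta$, $\b\theta_1=\mu\b{\b\theta}+a_1v_1ww_1^{-1}v^{-1}\b\theta$ and $\b{\b\theta}_1=\mu\theta+a_1vv_1^{-1}\b{\b\theta}$, together with the corresponding equations for $\bm\Phi$ at $\lambda$. For the first $\bm L$-relation one substitutes these into $\t\phi_1-\lambda\t{\b\phi}$ and collects terms: the coefficient of $\phi$ reproduces the $\phi$-part $-a_1\t w^{-1}\t w_1\,\mu\b\theta\theta^{-1}$ of $a_1\t w^{-1}\t w_1\t\phi$ at once, while the coefficient of $\b\phi$ collapses to the identity
\[
a_1v_1ww_1^{-1}v^{-1}-\frac{\mu\b\theta_1}{\theta_1}+\frac{\mu\b{\b\theta}}{\b\theta}=\frac{a_1\theta\b\theta_1w_1}{\b\theta\theta_1w},
\]
which is exactly what the seed equations for $\b\theta_1$ and $\theta_1$ yield once $a_1v_1ww_1^{-1}v^{-1}$ is eliminated through the $\b\theta_1$-equation and $\theta_1-\mu\b\theta=a_1w^{-1}w_1\theta$ is used. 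The other two $\bm L$-relations close up in the same manner. I would conclude by recording that $\t{\bm\Phi}$ still obeys the reduction \eqref{phi-condition} at the same $\lambda$ (the short computation $\lambda^{-1}T_{n_3}(\t\phi)=\t{\b\phi}$, and cyclically), so the output lives in the correct reduced class. The only real difficulty is bookkeeping: tracking the nonlinear dependence of $\t v,\t w$ on the components of $\bm\Theta$ and, in each matrix entry, identifying the precise combination of the seed's own Lax equations that reduces the tilded potential-ratios back to ratios of the seed eigenfunction.
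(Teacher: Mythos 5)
Your proof is correct, but it follows a genuinely different route from the paper's. The paper proves Proposition~\ref{prop1-dpmBSQ} as the $N=1$ case of Proposition~\ref{prop1N-dpmBSQ}: it recasts the invariance of the linear system as bilinear identities among Casoratian determinants (equations \eqref{bilinear-1}--\eqref{bilinear-3}) and establishes them by a Laplace-expansion argument on a vanishing $2N\times 2N$ determinant, the $\b{\cdot}$ and $\b{\b\cdot}$ cases then following by applying $T_{n_3}$ and $T_{n_3}^2$. You instead unpack the two-column Casoratian via the $3$-periodic reduction, obtain the explicit Darboux matrix $\bm D$ (which is exactly the matrix recorded in the paper's Remark after the proposition, though the paper does not use it in the proof), and verify the intertwining identities $\bm D_1\bm L=\t{\bm L}\bm D$, $\bm D_2\bm M=\t{\bm M}\bm D$ directly from the seed Lax equations for $\bm\Theta$ at $\lambda=\mu$; your key scalar identity and its reduction via $\theta_1-\mu\b\theta=a_1w^{-1}w_1\theta$ and the $\b\theta_1$-equation check out, as do the potential formulae $\t v=\mu^2\b{\b\theta}\theta^{-1}v$, $\t w=\mu\b\theta\theta^{-1}w$ and the preservation of the reduction \eqref{phi-condition}. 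Your gauge-covariance computation is more elementary and transparent for the single application, and also makes the cyclic $\mathbb{Z}_3$ structure of the Darboux matrix visible; the paper's determinantal route is heavier here but is what delivers the closed-form $N$-fold iteration of Proposition~\ref{prop1N-dpmBSQ} in one stroke, whereas your argument would have to be iterated. One minor imprecision: comparing coefficients of $\lambda^0,\lambda^1,\lambda^2$ in the $3\times 3$ matrix identity is not literally ``the same as'' checking the six scalar relations, but because of the cyclic sparsity of $\bm D$, $\bm L$, $\bm M$ the nontrivial conditions do reduce to exactly those six relations for the tilded data, so the substance of your claim stands.
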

The proof of \textbf{Prop.\ref{prop1-dpmBSQ}} is a straightforward computation.  The reader can see that of \textbf{Prop.\ref{prop1N-dpmBSQ}} when $N=1$.  So we do not give details here.

Through the 3-periodic conditions of $(\theta, \b\theta, \b{\b\theta})$ and $(\phi, \b\phi, \b{\b\phi})$, we can obtain the 3-periodic relationships for the new eigenfunctions $(\t\phi, \t{\b\phi}, \t{\b{\b\phi}})$ as follows
\begin{align}\label{3-periodic-cond-1}
\t\phi=\lambda^{-3}T_{n_3}^3(\t\phi),~~\t{\b{\b\phi}}=\lambda^{-2}T_{n_3}^2(\t\phi),~~\t{\b\phi}=\lambda^{-1}T_{n_3}(\t\phi),
\end{align}
which means the 3-periodic conditions are preserved under the Darboux transformation \eqref{dpmBSQ-DT-1-1-1}.  Furthermore, for the new potentials $(\t v, \t w)$, we have
\begin{subequations}
\begin{align*}
&\t v=\frac{T_{_{n_3}}^2\!(\theta)}{\theta}v\!=\!\mu^3 T_{_{n_3}}\!\!\left(\frac{\b\theta}{T_{_{n_3}}\!(\b\theta)}\right)\!v\!=
\!\mu^3\frac{\b{\b\theta}}{T_{_{n_3}}\!\!\left(\b{\b\theta}\right)\!}v,\\
&\t w=\frac{T_{_{n_3}}\!(\theta)}{\theta}w\!=\!\mu^3\frac{\b\theta}{T_{_{n_3}}^2\!(\b\theta)}w\!=\!
T_{_{n_3}}\!\left(\frac{T_{_{n_3}}\!\left(\b{\b\theta}\right)}{\b{\b\theta}}\right)\!w,
\end{align*}
\end{subequations}
which means each component itself of the vector $(\theta, \b\theta, \b{\b\theta})$ can be used to express the new potentials $(\t v, \t w)$.

\noindent \emph{\textbf{Remark.}} We may also describe the gauge transformation of $\bm\Phi=(\phi, \b\phi, \b{\b\phi})^T$ in \eqref{dpmBSQ-DT-1-1-1} in matrix form as follows
\begin{equation*}
\mathrm {DT}^{\bm\Theta}:~~
\bm\Phi\rightarrow\left(
\begin{array}{ccc}
-\mu\b\theta\theta^{-1} & \lambda & 0\\
0 &  -\mu\b{\b\theta}~\b\theta^{-1} & \lambda\\
\lambda & 0 & -\mu\theta\b{\b\theta}^{-1}
\end{array}\right)\bm\Phi.
\end{equation*}
For convenience of the construction of binary Darboux transformations, we here write in scalar form shown in \eqref{dpmBSQ-DT-1-1-1}.

As usual, we may obtain closed form expressions for the result of $N$ applications of the above Darboux transformation. Then we have the following.
\begin{prop}\label{prop1N-dpmBSQ}
Let $(\theta^{1}, \b\theta^{1}, \b{\b\theta}^{1})^T, (\theta^{2},\b\theta^{2}, \b{\b\theta}^{2})^T,  \dots,  (\theta^{N}, \b\theta^{N}, \b{\b\theta}^{N})^T$, having 3-periodic property  $\theta^{k}\!\!=\!\!\lambda_k\!^{-3}T_{_{n_3}}\!\!\!\!^3(\theta^{k})$, $\b{\b\theta}^{k}\!\!\!=\!\!\lambda_k\!^{-2}T_{_{n_3}}\!\!\!\!^2(\theta^{k})$, $\b\theta^{k}\!\!=\!\!\lambda_k\!^{-1}T_{_{n_3}}(\theta^{k})$, be $N$ non-zero independent vector solutions of the linear system \eqref{matrix-LP} by taking $\lambda=\lambda_k$, $k=1, 2, \dots, N$, for some $(v, w)$, such that $C_{_{[3]}}(\theta^{1},\theta^{2}, \dots, \theta^{N})\neq0$, $C_{_{[3]}}(\b\theta^{1},\b\theta^{2}, \dots, \b\theta^{N})\neq0$, and $C_{_{[3]}}(\b{\b\theta}^{1},\b{\b\theta}^{2}, \dots, \b{\b\theta}^{N})\neq0$.  Then
\begin{subequations}\label{N-DT1}
\begin{align}
&\phi\rightarrow\t\phi=\frac{C_{_{[3]}}(\theta^{1},\theta^{2}, \dots, \theta^{N},\phi)}{C_{_{[3]}}(\theta^{1},\theta^{2}, \dots, \theta^{N})},
~~\b\phi\rightarrow\t{\b\phi}=\frac{C_{_{[3]}}(\b\theta^{1},\b\theta^{2}, \dots, \b\theta^{N},\b\phi)}{C_{_{[3]}}(\b\theta^{1},\b\theta^{2}, \dots, \b\theta^{N})},
~~\b{\b\phi}\rightarrow\t{\b{\b\phi}}=\frac{C_{_{[3]}}(\b{\b\theta}^{1},\b{\b\theta}^{2}, \dots, \b{\b\theta}^{N},\b{\b\phi})}{C_{_{[3]}}(\b{\b\theta}^{1},\b{\b\theta}^{2}, \dots, \b{\b\theta}^{N})},\label{N-DT1-1}\\
&v\rightarrow \t v\!\!=\!\!\frac{T_{_{n_3}}^2\!\!\left(C_{_{[3]}}(\theta^{1},\theta^{2}, \dots, \theta^{N})\right)}{C_{_{[3]}}(\theta^{1},\theta^{2}, \dots, \theta^{N})}v,~~~~
w\rightarrow \t w\!\!=\!\!\frac{T_{_{n_3}}\!\!\left(C_{_{[3]}}(\theta^{1},\theta^{2}, \dots, \theta^{N})\right)}{C_{_{[3]}}(\theta^{1},\theta^{2}, \dots, \theta^{N})}w,
\end{align}
\end{subequations}
leaves \eqref{matrix-LP} invariant.
\end{prop}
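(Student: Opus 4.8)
The plan is to argue by induction on $N$, taking the base case $N=1$ to be exactly \textbf{Prop.\ref{prop1-dpmBSQ}} and realising the $N$-fold map as a composition of $N$ single Darboux transformations. Since each single step leaves the matrix system \eqref{matrix-LP} invariant, so does any composition of them; the whole content is therefore (i) to check that the hypotheses of \textbf{Prop.\ref{prop1-dpmBSQ}} are met at every stage and (ii) to show that the composition collapses to the closed Casoratian formulas \eqref{N-DT1}. A preliminary shift computation reduces the bookkeeping: writing $\tau_N:=C_{_{[3]}}(\theta^{1},\dots,\theta^{N})$ and using the 3-periodicity $\b\theta^{k}=\lambda_k^{-1}T_{n_3}(\theta^{k})$, $\b{\b\theta}^{k}=\lambda_k^{-2}T_{n_3}^2(\theta^{k})$, one finds $C_{_{[3]}}(\b\theta^{1},\dots,\b\theta^{N})=\big(\prod_{k=1}^{N}\lambda_k^{-1}\big)T_{n_3}(\tau_N)$ and $C_{_{[3]}}(\b{\b\theta}^{1},\dots,\b{\b\theta}^{N})=\big(\prod_{k=1}^{N}\lambda_k^{-2}\big)T_{n_3}^2(\tau_N)$, with the same shift relations for the bordered determinants in \eqref{N-DT1-1}. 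Hence the $\b\phi$- and $\b{\b\phi}$-rules are $n_3$-shifts of the $\phi$-rule, the potentials involve $\tau_N$ alone, and the preservation of 3-periodicity \eqref{3-periodic-cond-1} at level $N$ is the same computation; it therefore suffices to treat the $\phi$-rule and the two potentials.

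Assume the statement for $N-1$ and put $\tau_{N-1}:=C_{_{[3]}}(\theta^{1},\dots,\theta^{N-1})$. By the inductive hypothesis the $(N-1)$-fold transformation carries $(v,w)$ to $v^{[N-1]}=\frac{T_{n_3}^2(\tau_{N-1})}{\tau_{N-1}}v$, $w^{[N-1]}=\frac{T_{n_3}(\tau_{N-1})}{\tau_{N-1}}w$ and sends any solution $\phi$ of \eqref{matrix-LP} to $\phi^{[N-1]}=\frac{C_{_{[3]}}(\theta^{1},\dots,\theta^{N-1},\phi)}{\tau_{N-1}}$, a solution of the transformed system; in particular $\theta^{N}$ becomes $\theta^{N,[N-1]}=\frac{\tau_N}{\tau_{N-1}}$. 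First I would verify that $\theta^{N,[N-1]}$ still satisfies the 3-periodic condition with $\mu=\lambda_N$, which is the iterated form of \eqref{3-periodic-cond-1}, so that the single transformation of \textbf{Prop.\ref{prop1-dpmBSQ}} built from $\theta^{N,[N-1]}$ applies to the $(N-1)$-transformed system. Applying it produces $w^{[N]}=\frac{T_{n_3}(\theta^{N,[N-1]})}{\theta^{N,[N-1]}}w^{[N-1]}$ and $v^{[N]}=\frac{T_{n_3}^2(\theta^{N,[N-1]})}{\theta^{N,[N-1]}}v^{[N-1]}$, which telescope immediately to $\frac{T_{n_3}(\tau_N)}{\tau_N}w$ and $\frac{T_{n_3}^2(\tau_N)}{\tau_N}v$, the potentials claimed in \eqref{N-DT1}.

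It remains to identify the composed eigenfunction. By \eqref{dpmBSQ-DT-1-1-1} the single step sends $\phi^{[N-1]}$ to $C_{_{[3]}}(\theta^{N,[N-1]},\phi^{[N-1]})/\theta^{N,[N-1]}$; substituting $\theta^{N,[N-1]}=\tau_N/\tau_{N-1}$ and $\phi^{[N-1]}$, and clearing the denominators $\tau_{N-1},T_{n_3}(\tau_{N-1})$, reduces the desired equality $\phi^{[N]}=C_{_{[3]}}(\theta^{1},\dots,\theta^{N},\phi)/\tau_N$ to the single bilinear relation
\begin{equation*}
C_{_{[3]}}(\theta^{1},\dots,\theta^{N},\phi)\,T_{n_3}(\tau_{N-1})=\tau_N\,T_{n_3}\!\big(C_{_{[3]}}(\theta^{1},\dots,\theta^{N-1},\phi)\big)-T_{n_3}(\tau_N)\,C_{_{[3]}}(\theta^{1},\dots,\theta^{N-1},\phi).
\end{equation*}
This is precisely the Desnanot--Jacobi (Plücker) identity for the $(N+1)\times(N+1)$ Casoratian $C_{_{[3]}}(\theta^{1},\dots,\theta^{N},\phi)$: regard its rows as the functions $\theta^{1},\dots,\theta^{N-1},\theta^{N},\phi$ and its columns as the successive shifts $n_3\to n_3+k$, $k=0,\dots,N$; deleting the two rows $\theta^{N},\phi$ together with the two extreme columns $k=0$ and $k=N$ yields exactly the minors $\tau_N$, $T_{n_3}(\tau_N)$, $C_{_{[3]}}(\theta^{1},\dots,\theta^{N-1},\phi)$, $T_{n_3}(C_{_{[3]}}(\theta^{1},\dots,\theta^{N-1},\phi))$ and $T_{n_3}(\tau_{N-1})$ in the above combination. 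This closes the induction.

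The step I expect to cost the most care is matching the minors and fixing the overall sign in this Jacobi identity: one must order rows and columns so that the deleted columns are the two extreme shifts and the deleted rows are $\theta^{N}$ and $\phi$, and then confirm that the Plücker relation carries the $+$ sign written above and not a spurious sign coming from the positions of the deleted lines (the $N=1$ instance, where the identity degenerates to the definition $C_{_{[3]}}(\theta,\phi)=\theta\,T_{n_3}(\phi)-T_{n_3}(\theta)\,\phi$, pins down the sign). A secondary point is that the induction runs through the intermediate Casoratians $\tau_{1},\dots,\tau_{N-1}$, so one formally needs these nonzero for \textbf{Prop.\ref{prop1-dpmBSQ}} to apply at each stage; since the transformed quantities are rational in the data and in $\lambda_{1},\dots,\lambda_{N}$ and the identities above are polynomial, the conclusion extends by genericity to the whole locus where $\tau_N\neq0$ and the two companion Casoratians of the $\b\theta^{k}$ and $\b{\b\theta}^{k}$ are nonzero, as hypothesised.
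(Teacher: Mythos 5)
Your argument is correct, but it is not the route the paper takes. The paper proves the proposition by \emph{direct verification}: it writes out the linear equations that the transformed quantities must satisfy, clears denominators to reduce them to the bilinear Casoratian identities \eqref{bilinear-1}--\eqref{bilinear-3}, and proves the first of these by embedding the relevant minors into a single larger determinant that vanishes identically, evaluated via the Laplace expansion; the barred versions then follow by applying $T_{n_3}$ and $T_{n_3}^2$. Your proof instead composes $N$ copies of the one-step transformation of \textbf{Prop.\ref{prop1-dpmBSQ}} and collapses the telescoping product using the Desnanot--Jacobi identity; I checked the minor bookkeeping and the sign in your key relation
\begin{equation*}
C_{_{[3]}}(\theta^{1},\dots,\theta^{N},\phi)\,T_{n_3}(\tau_{N-1})=\tau_N\,T_{n_3}\!\bigl(C_{_{[3]}}(\theta^{1},\dots,\theta^{N-1},\phi)\bigr)-T_{n_3}(\tau_N)\,C_{_{[3]}}(\theta^{1},\dots,\theta^{N-1},\phi),
\end{equation*}
and both are right, as is your observation that the barred rules and the preservation of \eqref{3-periodic-cond-1} reduce to $n_3$-shifts of the unbarred rule. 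What each approach buys: the paper's direct method needs only the stated hypotheses ($\tau_N$ and its two barred companions nonzero) and makes no reference to intermediate transformations, whereas your induction formally requires $\tau_1,\dots,\tau_{N-1}\neq0$ and you must (and do) patch this with a genericity argument; on the other hand your route makes the iterative structure transparent and is closer in spirit to how the paper itself proves the binary version, \textbf{Prop.\ref{N-binary-lmBSQ}}, which is done by induction with a Jacobi identity. One caveat to keep in mind: the paper declares the proof of \textbf{Prop.\ref{prop1-dpmBSQ}} to be the $N=1$ case of this proposition, so your base case cannot simply be cited --- it must be checked directly, which amounts to the $N=1$ instance of the paper's bilinear identity (an elementary computation, but it means your induction ultimately bottoms out in the same kind of verification the paper performs for general $N$).
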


\begin{proof}
Let
\begin{align*}
F:=&C_{_{[3]}}\left(\theta^1, \theta^2, \dots, \theta^N\right)\!=\!|\bm\theta(0),\bm\theta(1),  \dots, \bm\theta(N-1)|,\\
G:=&C_{_{[3]}}\left(\theta^1, \theta^2, \dots, \theta^N,\phi\right)\!=\!|\bm\theta^{^\dag}\!(0),\bm\theta^{^\dag}\!(1), \dots, \bm\theta^{^\dag}\!(N)|,\\
\b F:=&C_{_{[3]}}\left(\b\theta^1, \b\theta^2, \dots, \b\theta^N\right)\!=\!|\b{\bm\theta}(0), \b{\bm\theta}(1),  \dots, \b{\bm\theta}(N-1)|,\\
\b G:=&C_{_{[3]}}\left(\b\theta^1,\b\theta^2, \dots, \b\theta^N,\b\phi\right)\!=\!|\b{\bm\theta}^{\dag}\!(0),\b{\bm\theta}^{\dag}\!(1), \dots,\b{\bm\theta}^{\dag}\!(N)|,\\
\b{\b F}:=&C_{_{[3]}}\left(\b{\b\theta}^1, \b{\b\theta}^2, \dots, \b{\b\theta}^N\right)\!=\!|\b{\bm{\b\theta}}(0), \b{\bm{\b\theta}}(1),  \dots, \b{\bm{\b\theta}}(N-1)|,\\
\b{\b G}:=&C_{_{[3]}}\left(\b{\b\theta}^1,\b{\b\theta}^2, \dots, \b{\b\theta}^N,\b{\b\phi}\right)\!=\!|\b{\bm{\b\theta}}^{\dag}\!(0),\b{\bm{\b\theta}}^{\dag}\!(1), \dots,\b{\bm{\b\theta}}^{\dag}\!(N)|,
\end{align*}
where $\bm\theta\!=\!\left(\theta^1, \theta^2, \dots, \theta^N\right)^{\!T}$, $\bm\theta^{^\dag}\!=\!\left(\theta^1, \theta^2, \dots, \theta^N\!,\phi\right)^{\!T}$, $\b{\bm\theta}\!=\!\left(\b\theta^1, \b\theta^2, \dots, \b\theta^N\right)^{\!T}$, $\b{\bm\theta}^{^\dag}=\left(\b\theta^1, \b\theta^2, \dots, \b\theta^N\!,\b\phi\right)^{\!T}$, $\b{\bm{\b\theta}}\!=\!\left(\b{\b\theta}^1, \b{\b\theta}^2, \dots, \b{\b\theta}^N\right)^{\!T}$, $\b{\bm{\b\theta}}^{^\dag}=\left(\b{\b\theta}^1, \b{\b\theta}^2, \dots, \b{\b\theta}^N\!,\b{\b\phi}\right)^{\!T}$.
Moreover, by the 3-periodic properties
\begin{align*}
&\b\phi=\lambda^{-1}T_{n_3}(\phi),~~\b{\b\phi}=\lambda^{-1}T_{n_3}(\b\phi),~~\phi=\lambda^{-1}T_{n_3}(\b{\b\phi}),\\
&\b\theta^i=\lambda_i^{-1}T_{n_3}(\theta^i),~~\b{\b{\theta}}^i=\lambda_i^{-1}T_{n_3}({\b\theta}^i),~~
\theta^i=\lambda_i^{-1}T_{n_3}({\b{\b\theta}}^i),
\end{align*}
we get the relations
\begin{align*}
\frac{\b G}{\b F}=\lambda^{-1}T_{n_3}(\frac{G}{F}),~~\frac{\b{\b{G}}}{\b{\b{F}}}=\lambda^{-1}T_{n_3}(\frac{\b G}{\b F}),~~\frac{ G}{F}=\lambda^{-1}T_{n_3}(\frac{\b{\b{G}}}{\b{\b{F}}}).
\end{align*}
To verify that \eqref{matrix-LP} is invariant under \eqref{N-DT1} we must show that, for $\kappa=1, 2$,
\begin{subequations}
\begin{align}
G_\kappa T_{_{n_3}}\!(F)=&a_\kappa w_{\kappa}w^{-1}GT_{_{n_3}}\!(F_\kappa)+T_{_{n_3}}\!(G)F_\kappa,  \label{bilinear-1}\\
\b G_\kappa T_{_{n_3}}\!(\b F)=&a_\kappa T_{n_3}\!\!\left(w_{\kappa}w^{-1}\right)\b GT_{_{n_3}}\!(\b F_\kappa)+T_{_{n_3}}\!(\b G)\b F_\kappa, \label{bilinear-2} \\
\b{\b G}_\kappa T_{_{n_3}}\!(\b{\b F})=&a_\kappa T_{n_3}^2\!\!\left(w_{\kappa}w^{-1}\right)\b {\b G}T_{_{n_3}}\!(\b {\b F}_\kappa)+T_{_{n_3}}\!(\b {\b G})\b {\b F}_\kappa.  \label{bilinear-3}
\end{align}
\end{subequations}
For the equation \eqref{bilinear-1}, it is equivalent to, under the backward shift operator $T_{_{n_3}}^{-1}T_{_{n_{_\kappa}}}^{-1}$,
\begin{align}\label{bilinear-form-1}
T_{_{n_3}}^{-1}\!(G) F_{_{\b\kappa}}-a_\kappa T_{n_3}^{-1}\!\!\left(w{w_{\b\kappa}}^{\!\!-1}\right) T_{_{n_3}}^{-1}\!(G_{_{\b\kappa}})F-G_{_{\b \kappa}}T_{_{n_3}}^{-1}\!(F)=0.
\end{align}
From \eqref{matrix-LP}, we can deduce the following formulae which are the basic properties we use in the proof.
\begin{align*}
\bm\theta_{\b\kappa}(l)\!=\!&\bm\theta(l-1)+\sum_{i=0}^{l-2}(-\alpha_{\b\kappa})^{i+1}\bm\theta(l-2-i)+(-\alpha_{\b\kappa})^l \bm\theta_{\b\kappa}(0),  \\
{\bm\theta^\dag}_{\!\b\kappa}(l)\!=\!&\bm\theta^\dag(l-1)+\sum_{i=0}^{l-2}(-\alpha_{\b\kappa})^{i+1}{\bm\theta^\dag}(l-2-i)+(-\alpha_{\b\kappa})^l {\bm\theta^\dag}_{\!\b\kappa}(0),
\end{align*}
where $\alpha_{\b\kappa}=a_\kappa T_{n_3}^{-1}\!\!\left(w{w_{\b\kappa}}^{\!\!-1}\right)$ is a scalar, $\kappa=1, 2$.  Then, for \eqref{bilinear-form-1}, it follows that
\begin{subequations}\label{bilinear-cond}
\begin{align}
T_{_{n_3}}^{-1}(G)=&\left|\bm {\theta^\dag}\!(-1), \bm{\theta^\dag}\!(0), \bm{\theta^\dag}\!(1), \cdots \bm{\theta^\dag}\!(N-1)\right|,    \\
F_{_{\b\kappa}}=&\left|\bm {\theta_{\!\b\kappa}^\dag}(0), \bm{\theta}(0), \bm{\theta}(1), \cdots \bm{\theta}(N-2)\right|,       \\
\alpha_{\b\kappa}T_{_{n_3}}^{-1}\!(G_{_{\b\kappa}})=&-\left|\bm {\theta^\dag}_{\b\kappa}\!(0), \bm{\theta^\dag}\!(-1), \bm{\theta^\dag}\!(0), \cdots \bm{\theta^\dag}\!(N-2)\right|,     \\
F=&\left|\bm{\theta}(0), \bm{\theta}(1), \bm{\theta}(2),  \cdots \bm{\theta}(N-1)\right|,      \\
G_{_{\b \kappa}}=&\left|\bm {\theta_{\b\kappa}^\dag}\!(0), \bm{\theta^\dag}\!(0), \bm{\theta^\dag}\!(1),  \cdots \bm{\theta^\dag}\!(N-1)\right|,  \\
T_{_{n_3}}^{-1}\!(F)=&\left|\bm{\theta}(-1), \bm{\theta}(0), \bm{\theta}(1), \cdots \bm{\theta}(N-2)\right|.
\end{align}
\end{subequations}
Substituting the above Casoratian determinants \eqref{bilinear-cond} into the left-hand side of \eqref{bilinear-form-1},  and using the Laplace theorem, we get
\begin{align*}
\!\!\!\!\!\!\!\!\!\!\!\!\!\!\!\!\text{LHS}\!=\!\left |\!\!
\begin{array}{ccccccccc}
\bm {\theta^\dag}_{\b\kappa}(0) & \bm{\theta^\dag}(-1) & \bm{\theta^\dag}(0) & \cdots & \bm{\theta^\dag}(N-2) & \bm0 & \cdots & \bm0 & \bm{\theta^\dag}(N-1)
\\
\bm {\theta}_{\b\kappa}\!(0) & \bm{\theta}(-1) & \bm0  & \cdots &  \bm0  &  \bm{\theta}(0)     & \cdots                      &     \bm{\theta}(N-2)     & \bm{\theta}(N-1)
\end{array}
\!\!\right |\!=\!0.
\end{align*}
So the equation \eqref{bilinear-1} is proved. Moreover,  the equations \eqref{bilinear-2} and \eqref{bilinear-3} can be obtained from the equation \eqref{bilinear-1} by taking the difference operator $T_{n_3}$ and $T_{n_3}^2$ to act on it, respectively. So the equations \eqref{bilinear-2} and \eqref{bilinear-3}  are right as well.
\end{proof}

Through the 3-periodic conditions of $(\theta^{1}, \b\theta^{1}, \b{\b\theta}^{1})^T, (\theta^{2},\b\theta^{2}, \b{\b\theta}^{2})^T,  \dots,  (\theta^{N}, \b\theta^{N}, \b{\b\theta}^{N})^T$ and $(\phi, \b\phi, \b{\b\phi})$, we can obtain the 3-periodic relationships for the new eigenfunctions $(\t\phi, \t{\b\phi}, \t{\b{\b\phi}})$ same as that shown in \eqref{3-periodic-cond-1}. Similarly, for the new potentials $(\t v, \t w)$, we have
\begin{align*}
&\t v\!\!=\!\!\frac{T_{_{n_3}}^2\!\!\left(C_{_{[3]}}(\theta^{1},\theta^{2}, \dots, \theta^{N})\right)}{C_{_{[3]}}(\theta^{1},\theta^{2}, \dots, \theta^{N})}v
\!\!=\!\!\prod_{i=1}^{N}\!\!\lambda_i^3T_{n_3}\!\!\left(\frac{C_{_{[3]}}(\b\theta^{1},\b\theta^{2}, \dots, \b\theta^{N})}{T_{_{n_3}}\!\!\left(C_{_{[3]}}(\b\theta^{1},\b\theta^{2}, \dots, \b\theta^{N})\right)}\right)v
\!\!=\!\!\prod_{i=1}^{N}\!\!\lambda_i^3\frac{C_{_{[3]}}(\b{\b\theta}^{1},\b{\b\theta}^{2}, \dots, \b{\b\theta}^{N})}{T_{_{n_3}}\!\!\left(C_{_{[3]}}(\b{\b\theta}^{1},\b{\b\theta}^{2}, \dots, \b{\b\theta}^{N})\right)}v,\\
&\t w\!\!=\!\!\frac{T_{_{n_3}}\!\!\left(C_{_{[3]}}(\theta^{1},\theta^{2}, \dots, \theta^{N})\right)}{C_{_{[3]}}(\theta^{1},\theta^{2}, \dots, \theta^{N})}w
\!\!=\!\!\prod_{i=1}^{N}\!\!\lambda_i^3\frac{C_{_{[3]}}(\b\theta^{1},\b\theta^{2}, \dots, \b\theta^{N})}{T_{_{n_3}}^2\!\!\left(C_{_{[3]}}(\b\theta^{1},\b\theta^{2}, \dots, \b\theta^{N})\right)}w
\!\!=\!\!T_{n_3}\!\!\left(\!\!\frac{T_{_{n_3}}\!\!\left(\!\!C_{_{[3]}}(\b{\b\theta}^{1},\b{\b\theta}^{2}, \dots, \b{\b\theta}^{N})\right)}{C_{_{[3]}}(\b{\b\theta}^{1},\b{\b\theta}^{2}, \dots, \b{\b\theta}^{N})}\!\!\right)\!\!w.
\end{align*}

As described in \cite{SNZ-2014}, the Hirota-Miwa equation \eqref{H-M-1} is invariant with respect to the reversal of all lattice directions $n_i\rightarrow -n_i$, i=1,2. But its linear system \eqref{H-M-LP-1} does not have such invariance and so the reflections $n_i\rightarrow -n_i$, i=1,2, acting on \eqref{H-M-LP-1} give a second linear system on the vector eigenfunction $\bm\Psi=(\psi,\b \psi)^T$ (see \cite{SNZ-2014}) .  So here we introduce eigenfunctions  $\psi(n_1,n_2,n_3)$, $\b \psi(n_1, n_2, n_3)$ and $\b{\b{\psi}}(n_1, n_2, n_3)$ and give a 3-periodic condition, as follows
\begin{equation}\label{psi-condition}
\psi=\lambda^{-3}T_{n_3}^{-3}(\psi),\quad \b{\b\psi}=\lambda^{-2} T_{n_3}^{-2}(\psi), \quad \b{\psi}=\lambda^{-1} T_{n_3}^{-1}(\psi),
\end{equation}
where the parameter $\lambda$ serves as the spectral parameter. From \eqref{psi-condition}, we have
\begin{subequations}\label{bpsi-condition}
\begin{align}
\b\psi=\lambda^{-3}T_{n_3}^{-3}(\b\psi),\quad \psi=\lambda^{-2} T_{n_3}^{-2}(\b\psi), \quad \b{\b\psi}=\lambda^{-1} T_{n_3}^{-1}(\b\psi),\\
\b{\b\psi}=\lambda^{-3}T_{n_3}^{-3}(\b{\b\psi}),\quad \b\psi=\lambda^{-2} T_{n_3}^{-2}(\b{\b\psi}), \quad \psi=\lambda^{-1} T_{n_3}^{-1}(\b{\b\psi}).
\end{align}
\end{subequations}
We impose the above 3-periodic conditions of $\bm\Psi(n_1,n_2,n_3)$ on the second linear system of the Hirota-Miwa equation and obtain a second linear system of the discrete modified Boussinesq equation \eqref{d-p-MBSQ} as follows
\begin{subequations}\label{matrix-LP-2}
\begin{gather}
\bm\Psi_{\b1}=\bm U\bm\Psi,\label{matrix-LP-2-1}\\
\bm\Psi_{\b2}=\bm V\bm\Psi, \label{matrix-LP-2-2}
\end{gather}
\end{subequations}
where
\begin{equation}
\bm U=\left(
    \begin{array}{ccc}
      a_1v_{\b1}v^{-1} & \lambda & 0 \\
      0 & a_1w_{\b 1}w^{-1}vv_{\b 1}^{-1} & \lambda  \\
      \lambda & 0 & a_1ww_{\b1} ^{-1}
    \end{array}
  \right)
,~~
\bm V=\left(
    \begin{array}{ccc}
      a_2v_{\b2}v^{-1} & \lambda & 0 \\
      0 & a_2w_{\b 2}w^{-1}vv_{\b 2}^{-1} & \lambda  \\
      \lambda & 0 & a_2ww_{\b2} ^{-1}
    \end{array}
  \right)
.\nonumber
\end{equation}
One then finds the compatible condition
\begin{align}\label{compatibility-matrix-LP-2}
0&=\bm\Psi_{\b1\b2}-\bm\Psi_{\b2\b1}=(\bm U_{\b2}\bm V-\bm V_{\b1}\bm U)\bm\Psi\nonumber\\
&=\!\lambda\!\!\left(
    \begin{array}{ccc}
      0&\frac{(a_1\!v_{\b1}-a_2\!v_{\b2})\!v_{\b1\b2}w-(a_1\!v_{\b2}w_{\b1}-a_2\!v_{\b1}\!w_{\b2})\!v}{v_{\b1}v_{\b2}w}&0\\
      0& 0&\frac{(a_1\!v_{\b2}\!w_{\b1}-a_2\!v_{\b1}\!w_{\b2})w_{\b1\b2}-(a_1\!w_{\b2}-a_2\!w_{\b1})v_{\b1\b2}w}{v_{\b1\b2}w_{\b1}w_{\b2}}\\
      \frac{(a_1\!w_{\b2}-a_2\!w_{\b1})v-(a_1\!v_{\b1}-a_2\!v_{\b2})w_{\b1\b2}}{vw_{\b1\b2}}&0&0 \\
    \end{array}
  \right)\!\!\bm\Psi.
\end{align}
By applying the forward shift operators $T_{n_1}T_{n_2}$ on the entries $(1, 2)$, $(2, 3)$ and $(3, 1)$ of the matrix in \eqref{compatibility-matrix-LP-2}, we get the discrete modified Boussinesq equation \eqref{d-p-MBSQ}.

\noindent\emph{\textbf{Remark.}} This linear system \eqref{matrix-LP-2} is not the adjoint form of the linear system \eqref{matrix-LP} under the conception of the discrete adjoint operator introduced in \cite{N-1997}. Therefore, unlike in the continuous case, the binary Darboux transformations constructed in the current paper involve two different linear systems without adjoint relationship.

The Darboux transformation for the second linear system \eqref{matrix-LP-2} is as follows.
\begin{prop}\label{prop2-dpmBSQ}
Let $\bm\rho=(\rho,\b\rho, \b{\b\rho})^T$ having the 3-periodic property $\rho=\mu^{-3}T_{n_3}^{-3}(\rho),\quad \b{\b\rho}=\mu^{-2} T_{n_3}^{-2}(\rho), \quad \b{\rho}=\mu^{-1} T_{n_3}^{-1}(\rho)$, be a non-zero vector solution of the linear system \eqref{matrix-LP-2} by taking $\lambda=\mu$ for some $(v, w)$, then
\begin{subequations}\label{dpmBSQ-DT-2-1}
\begin{align}
\mathrm {DT}^{\bm\rho}:
&~~\psi\rightarrow\t\psi=\frac{C_{_{[\b 3]}}(\rho, \psi)} {\rho}, ~~~~\b\psi\rightarrow\t{\b\psi}=\frac{C_{_{[\b 3]}}(\b\rho, \b\psi)} {\b\rho} , ~~~~\b{\b\psi}\rightarrow\t{\b{\b\psi}}=\frac{C_{_{[\b 3]}}(\b{\b\rho}, \b{\b\psi})} {\b{\b\rho}} ,\label{dpmBSQ-DT-2-1-1}\\
&~~v\rightarrow \t v=\frac{T_{_{n_3}}^{-1}(\rho)}{\rho}v,
~~w\rightarrow \t w=\frac{T_{_{n_3}}^{-2}(\rho)}{\rho}w
\end{align}
\end{subequations}
leaves \eqref{matrix-LP-2} invariant.
\end{prop}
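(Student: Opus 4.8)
The plan is to avoid a fresh calculation and instead obtain \textbf{Prop.\ref{prop2-dpmBSQ}} from \textbf{Prop.\ref{prop1-dpmBSQ}} by the very reflection symmetry that produced the second linear system in the first place. Let $\sigma$ be the substitution that replaces every forward shift by the corresponding backward shift in all three directions, $T_{n_i}\mapsto T_{n_i}^{-1}$, and simultaneously exchanges the two potential symbols $v\leftrightarrow w$. Comparing entries, the diagonal terms $a_1w_1w^{-1}$, $a_1v_1v^{-1}ww_1^{-1}$, $a_1vv_1^{-1}$ of $\bm L$ go under $\sigma$ to $a_1v_{\b1}v^{-1}$, $a_1w_{\b1}w^{-1}vv_{\b1}^{-1}$, $a_1ww_{\b1}^{-1}$, which are exactly the entries of $\bm U$, and likewise $\sigma(\bm M)=\bm V$. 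Thus $\sigma$ carries the first linear system \eqref{matrix-LP} to the second \eqref{matrix-LP-2}. Moreover $\sigma$ fixes the discrete modified Boussinesq equation \eqref{d-p-MBSQ}: this is precisely what the compatibility condition \eqref{compatibility-matrix-LP-2} records, since its three off-diagonal entries are \eqref{d-p-MBSQ} read in the reflected variables with $v\leftrightarrow w$.

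First I would record the action of $\sigma$ on the remaining data. The forward-$n_3$ periodicity $\theta=\mu^{-3}T_{n_3}^3(\theta)$, $\b{\b\theta}=\mu^{-2}T_{n_3}^2(\theta)$, $\b\theta=\mu^{-1}T_{n_3}(\theta)$ of \textbf{Prop.\ref{prop1-dpmBSQ}} is sent by $\sigma$ to exactly the backward-$n_3$ periodicity $\rho=\mu^{-3}T_{n_3}^{-3}(\rho)$, $\b{\b\rho}=\mu^{-2}T_{n_3}^{-2}(\rho)$, $\b\rho=\mu^{-1}T_{n_3}^{-1}(\rho)$ assumed for $\bm\rho$. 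The eigenfunction formulas $\t\phi=C_{[3]}(\theta,\phi)/\theta$, etc., pass to $\t\psi=C_{[\b3]}(\rho,\psi)/\rho$, etc., once the forward Casoratian $C_{[3]}$ is replaced by its backward counterpart $C_{[\b3]}$, and the potential formulas $\t v=(T_{n_3}^2(\theta)/\theta)v$, $\t w=(T_{n_3}(\theta)/\theta)w$ become, after the swap $v\leftrightarrow w$, precisely $\t w=(T_{n_3}^{-2}(\rho)/\rho)w$ and $\t v=(T_{n_3}^{-1}(\rho)/\rho)v$. Since $\sigma$ is an invertible relabeling of the whole formalism that matches every hypothesis and every ingredient of $\mathrm{DT}^{\bm\Theta}$ with the corresponding one of $\mathrm{DT}^{\bm\rho}$, the assertion that $\mathrm{DT}^{\bm\rho}$ leaves \eqref{matrix-LP-2} invariant is just the $\sigma$-image of the (already established) \textbf{Prop.\ref{prop1-dpmBSQ}}, and hence holds.

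If a self-contained computation is preferred, the alternative is to mirror the proof of \textbf{Prop.\ref{prop1N-dpmBSQ}} in the case $N=1$. I would expand the three transformed eigenfunctions as $2\times2$ backward Casoratians, substitute $(\t\psi,\t{\b\psi},\t{\b{\b\psi}})$ together with $(\t v,\t w)$ into $\bm\Psi_{\b1}=\bm U\bm\Psi$ and $\bm\Psi_{\b2}=\bm V\bm\Psi$, and reduce the invariance requirement to a single bilinear identity in the Casoratian of $\bm\rho$ plus its two cyclic copies. That identity would then be established by expanding the shifted columns through \eqref{matrix-LP-2} — the analogue of the column recurrence $\bm\theta_{\b\kappa}(l)=\dots$ used earlier — and applying the Laplace theorem to produce a determinant with a repeated block, which vanishes.

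The main obstacle is bookkeeping rather than a genuine difficulty. In the symmetry route one must verify that $\sigma$ is a symmetry simultaneously at all three levels — the Lax matrices, the $3$-periodic constraint, and the Casoratian notation $C_{[\b3]}$ — since the direction reversal and the exchange $v\leftrightarrow w$ are intertwined and it is easy to misalign which component collects which factor. In the direct route the obstacle is tracking the twisted coefficients $a_\kappa v_{\b\kappa}v^{-1}$, $a_\kappa w_{\b\kappa}w^{-1}vv_{\b\kappa}^{-1}$, $a_\kappa ww_{\b\kappa}^{-1}$ cleanly enough for the Laplace expansion to collapse; once the identity for the $\psi$-component is proved, the $\b\psi$- and $\b{\b\psi}$-components follow by acting with $T_{n_3}^{-1}$ and $T_{n_3}^{-2}$ and invoking the cyclic periodicity \eqref{bpsi-condition}.
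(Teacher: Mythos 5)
Your proposal is correct, and your primary route is genuinely different from the paper's. The paper disposes of \textbf{Prop.\ref{prop2-dpmBSQ}} by declaring it ``a straightforward computation similar to that of \textbf{Prop.\ref{prop1N-dpmBSQ}} when $N=1$,'' i.e.\ the direct route you sketch second: expand the transformed eigenfunctions as backward Casoratians, reduce invariance of \eqref{matrix-LP-2} to a bilinear determinant identity, and kill it by a Laplace expansion with a repeated block; the $\b\psi$- and $\b{\b\psi}$-components then follow by acting with $T_{n_3}^{-1}$ and $T_{n_3}^{-2}$. Your first route instead derives \textbf{Prop.\ref{prop2-dpmBSQ}} from \textbf{Prop.\ref{prop1-dpmBSQ}} via the reflection-plus-swap symmetry, and your entry-by-entry check that it carries $\bm L,\bm M$ to $\bm U,\bm V$, the forward $3$-periodicity to the backward one, $C_{[3]}$ to $C_{[\b 3]}$, and $(\t v,\t w)=(T_{n_3}^2(\theta)\theta^{-1}v,\,T_{n_3}(\theta)\theta^{-1}w)$ to $(T_{n_3}^{-1}(\rho)\rho^{-1}v,\,T_{n_3}^{-2}(\rho)\rho^{-1}w)$ is accurate; the invariance of the solution set of \eqref{d-p-MBSQ} is exactly what \eqref{compatibility-matrix-LP-2} records after the shift $T_{n_1}T_{n_2}$. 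This buys you economy (no new determinant identity) and makes conceptual sense of why the second linear system exists at all, whereas the paper's computational route is self-contained and is the one that scales directly to the $N$-fold statement \textbf{Prop.\ref{prop2N-dpmBSQ}}.

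One point to tighten: ``replace every forward shift by a backward shift'' is a formal substitution, and to make it a proof you should realize $\sigma$ as an actual change of variables. Concretely, set $\check\theta(n_1,n_2,n_3):=\rho(-n_1,-n_2,-n_3)$, $\check{\b\theta}(n):=\b\rho(-n)$, $\check{\b{\b\theta}}(n):=\b{\b\rho}(-n)$, $\check v(n):=w(-n)$, $\check w(n):=v(-n)$, verify that $(\check v,\check w)$ solves \eqref{d-p-MBSQ} and that $(\check\theta,\check{\b\theta},\check{\b{\b\theta}})^T$ satisfies \eqref{matrix-LP} with the forward $3$-periodicity, apply \textbf{Prop.\ref{prop1-dpmBSQ}}, and then unwind the reflection. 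You clearly see this bookkeeping issue, but as written the argument leans on the substitution being meaning-preserving rather than exhibiting the bijection; adding those two lines closes it. With that, either of your routes is a complete proof.
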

The proof of \textbf{Prop.\ref{prop2-dpmBSQ}} is a straightforward computation and similar to that of \textbf{Prop.\ref{prop1N-dpmBSQ}} when $N=1$.  So we do not give details here.

Through the 3-periodic conditions of $(\rho, \b\rho, \b{\b\rho})$ and $(\psi, \b\psi, \b{\b\psi})$, we can obtain the 3-periodic relationships for the new eigenfunctions $(\t\psi, \t{\b\psi}, \t{\b{\b\psi}})$ as follows
\begin{align}\label{3-periodic-cond-3}
\t\psi=\lambda^{-3}T_{n_3}^{-3}(\t\psi),~~\t{\b{\b\psi}}=\lambda^{-2}T_{n_3}^{-2}(\t\psi),
~~\t{\b\psi}=\lambda^{-1}T_{n_3}^{-1}(\t\psi),
\end{align}
which means the 3-periodic conditions are preserved under the Darboux transformation \eqref{dpmBSQ-DT-2-1-1}.  Furthermore, for the new potentials $(\t v, \t w)$, we have
\begin{subequations}
\begin{align*}
&\t v=\frac{T_{_{n_3}}^{-1}(\rho)}{\rho}v=\mu^3\frac{\b\rho}{T_{_{n_3}}^{-2}(\b\rho)}v
=T_{_{n_3}}^{-1}\!\!\left(\frac{T_{_{n_3}}^{-1}(\b{\b\rho})}{\b{\b\rho}}\right)\!\!v,\\
&\t w=\frac{T_{_{n_3}}^{-2}(\rho)}{\rho}w=\mu^3T_{_{n_3}}^{-1}\!\!\left(\frac{\b\rho}{T_{_{n_3}}^{-1}(\b\rho)}\right)\!\!w
=\mu^{3}\frac{\b{\b\rho}}{T_{_{n_3}}^{-1}(\b{\b\rho})}w,
\end{align*}
\end{subequations}
which means each component itself of the vector $(\rho, \b\rho, \b{\b\rho})$ can be used to express the new potentials $(\t v, \t w)$.

Next we write down the closed form expression for the result of $N$ applications of the above Darboux transformation, which give solutions in Casoratian determinant form.

\begin{prop}\label{prop2N-dpmBSQ}
Let $(\rho^{1}, \b\rho^{1},\b{\b{\rho}}^{1})^T, (\rho^{2},\b\rho^{2},\b{\b{\rho}}^{2})^T,  \dots,  (\rho^{N}, \b\rho^{N},\b{\b{\rho}}^{N})^T$, having the 3-periodic property  $\rho^{k}\!=\!\lambda_k^{-3}T_{_{n_3}}^{-3}(\rho^{k}), \b{\b\rho}^{k}\!=\!\lambda_k^{-2}T_{_{n_3}}^{-2}(\rho^{k}),\b \rho^{k}\!=\!\lambda_k^{-1}T_{_{n_3}}^{-1}(\rho^{k})$, be $N$ non-zero independent vector solutions of the linear system \eqref{matrix-LP-2} by taking $\lambda=\lambda_k$, $k=1, 2, \dots, N$, for some $(v, w)$, such that $C_{_{[\b 3]}}(\rho^{1},\rho^{2}, \dots, \rho^{N})\neq0$, $C_{_{[\b 3]}}(\b\rho^{1},\b\rho^{2}, \dots, \b\rho^{N})\neq0$, $C_{_{[\b 3]}}(\b{\b\rho}^{1},\b{\b\rho}^{2}, \dots, \b{\b\rho}^{N})\neq0$.  Then
\begin{subequations}\label{N-DT2}
\begin{align}
&\psi\rightarrow\t\psi=\frac{C_{_{[\b 3]}}(\rho^{1},\rho^{2}, \dots, \rho^{N},\psi)}{C_{_{[\b 3]}}(\rho^{1},\rho^{2}, \dots, \rho^{N})},
~~\b\psi\rightarrow\t{\b\psi}=\frac{C_{_{[\b 3]}}(\b\rho^{1},\b\rho^{2}, \dots, \b\rho^{N},\b\psi)}{C_{_{[\b 3]}}(\b\rho^{1},\b\rho^{2}, \dots, \b\rho^{N})},
~~\b{\b\psi}\rightarrow\t{\b{\b\psi}}=\frac{C_{_{[\b 3]}}(\b{\b\rho}^{1},\b{\b\rho}^{2}, \dots, \b{\b\rho}^{N},\b{\b\psi})}{C_{_{[\b 3]}}(\b{\b\rho}^{1},\b{\b\rho}^{2}, \dots, \b{\b\rho}^{N})},\label{N-DT2-1}
\\
&v\rightarrow \t v=\frac{T_{_{n_3}}^{-1}\!\!\left(C_{_{[\b 3]}}(\rho^{1},\rho^{2}, \dots, \rho^{N})\right)}{C_{_{[\b 3]}}(\rho^{1},\rho^{2}, \dots, \rho^{N})}v,~~w\rightarrow \t w=\frac{T_{_{n_3}}^{-2}\!\!\left(C_{_{[\b 3]}}(\rho^{1},\rho^{2}, \dots, \rho^{N})\right)}{C_{_{[\b 3]}}(\rho^{1},\rho^{2}, \dots, \rho^{N})}w,
\end{align}
\end{subequations}
leaves \eqref{matrix-LP-2} invariant.
\end{prop}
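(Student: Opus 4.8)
The plan is to mirror the proof of Proposition~\ref{prop1N-dpmBSQ} line by line, replacing every forward shift in $n_3$ by a backward shift, since the second linear system \eqref{matrix-LP-2} and the transformation \eqref{N-DT2} are exactly the backward-shift counterparts of \eqref{matrix-LP} and \eqref{N-DT1}. First I would introduce the six backward Casoratians
\begin{align*}
P:=&C_{_{[\b 3]}}(\rho^1,\dots,\rho^N),\qquad Q:=C_{_{[\b 3]}}(\rho^1,\dots,\rho^N,\psi),\\
\b P:=&C_{_{[\b 3]}}(\b\rho^1,\dots,\b\rho^N),\qquad \b Q:=C_{_{[\b 3]}}(\b\rho^1,\dots,\b\rho^N,\b\psi),\\
\b{\b P}:=&C_{_{[\b 3]}}(\b{\b\rho}^1,\dots,\b{\b\rho}^N),\qquad \b{\b Q}:=C_{_{[\b 3]}}(\b{\b\rho}^1,\dots,\b{\b\rho}^N,\b{\b\psi}),
\end{align*}
so that \eqref{N-DT2-1} reads $\t\psi=Q/P$, $\t{\b\psi}=\b Q/\b P$, $\t{\b{\b\psi}}=\b{\b Q}/\b{\b P}$. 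Using $\b\psi=\lambda^{-1}T_{n_3}^{-1}(\psi)$ and $\b\rho^i=\lambda_i^{-1}T_{n_3}^{-1}(\rho^i)$ from \eqref{psi-condition} and \eqref{bpsi-condition}, together with their cyclic images, I would then derive the quotient relations $\b Q/\b P=\lambda^{-1}T_{n_3}^{-1}(Q/P)$, $\b{\b Q}/\b{\b P}=\lambda^{-1}T_{n_3}^{-1}(\b Q/\b P)$ and $Q/P=\lambda^{-1}T_{n_3}^{-1}(\b{\b Q}/\b{\b P})$, exactly paralleling the forward case.

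Next I would reduce the invariance of \eqref{matrix-LP-2} under \eqref{N-DT2} to three bilinear identities, one for each of $\psi,\b\psi,\b{\b\psi}$ and each $\kappa=1,2$, now built from the $(1,1)$-entries $a_\kappa v_{\b\kappa}v^{-1}$ of $\bm U$ and $\bm V$ in place of the $a_\kappa w_\kappa w^{-1}$ of $\bm L,\bm M$. By the quotient relations above it suffices to establish the identity for the $\psi$-component; the remaining two then follow by acting with $T_{n_3}^{-1}$ and $T_{n_3}^{-2}$ on it, just as \eqref{bilinear-2} and \eqref{bilinear-3} were deduced from \eqref{bilinear-1}.

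The core step is the $\psi$-identity itself. Here I would use the linear system \eqref{matrix-LP-2} to establish the recurrence
\begin{align*}
\bm\rho_{\b\kappa}(l)=\bm\rho(l-1)+\sum_{i=0}^{l-2}(-\beta_{\b\kappa})^{i+1}\bm\rho(l-2-i)+(-\beta_{\b\kappa})^l\bm\rho_{\b\kappa}(0),
\end{align*}
with $\beta_{\b\kappa}$ the scalar built from the $(1,1)$-entries of $\bm U,\bm V$ (the backward analogue of $\alpha_{\b\kappa}$), and likewise for the augmented column. After recasting the bilinear identity in the backward-shifted form analogous to \eqref{bilinear-form-1} and substituting the resulting Casoratian expressions, the Laplace expansion theorem collapses the left-hand side into a single $(2N+1)$-order determinant carrying an internal column dependence, which therefore vanishes identically.

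The main obstacle I anticipate is purely organizational: pinning down the precise form of $\beta_{\b\kappa}$ and keeping track of which entry of $\bm U,\bm V$ feeds each of the three bilinear identities, since the roles of $v$ and $w$ are interchanged relative to the first system and the shifts run in the opposite direction. Once the correct recurrence and backward-shifted bilinear form are in place, the Laplace-expansion vanishing argument is formally identical to that of Proposition~\ref{prop1N-dpmBSQ}, so no genuinely new difficulty should arise.
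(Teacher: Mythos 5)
Your proposal is correct and is exactly the route the paper intends: the paper omits this proof, stating only that it is ``similar to the proof of Prop.~\ref{prop1N-dpmBSQ}'', and your backward-shift mirroring (with the $(1,1)$-entry $a_\kappa v_{\b\kappa}v^{-1}$ replacing $a_\kappa w_\kappa w^{-1}$, the quotient relations under $T_{n_3}^{-1}$, and the Laplace-expansion vanishing of the $(2N+2)$-column block determinant) is precisely that similarity carried out. Your identification that the $\b\psi$ and $\b{\b\psi}$ identities follow by acting with $T_{n_3}^{-1}$ and $T_{n_3}^{-2}$ is consistent with the structure of $\bm U,\bm V$, since $T_{n_3}^{-1}(v_{\b\kappa}v^{-1})=w_{\b\kappa}w^{-1}vv_{\b\kappa}^{-1}$ is indeed the $(2,2)$ entry.
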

The proof of \textbf{Prop.\ref{prop2N-dpmBSQ}} is similar to the proof of \textbf{Prop.\ref{prop1N-dpmBSQ}}, so it is omitted here.

Through the 3-periodic conditions of $(\rho^{1}, \b\rho^{1}, \b{\b\rho}^{1})^T, (\rho^{2},\b\rho^{2}, \b{\b\rho}^{2})^T,  \dots,  (\rho^{N}, \b\rho^{N}, \b{\b\rho}^{N})^T$ and $(\psi, \b\psi, \b{\b\psi})$, we can obtain the 3-periodic relationships for the new eigenfunctions $(\t\psi, \t{\b\psi}, \t{\b{\b\psi}})$ same as that shown in \eqref{3-periodic-cond-3}. Similarly, for the new potentials $(\t v, \t w)$, we have
\begin{align*}
&\t v=\frac{T_{_{n_3}}^{-1}\!\!\left(C_{_{[\b 3]}}(\rho^{1},\rho^{2}, \dots, \rho^{N})\right)}{C_{_{[\b 3]}}(\rho^{1},\rho^{2}, \dots, \rho^{N})}v
=\!\!\prod_{k=1}^{N}\!\!\lambda_k^3\frac{C_{_{[\b 3]}}(\b\rho^{1},\b\rho^{2}, \dots, \b\rho^{N})}{T_{_{n_3}}^{-2}\!\!\left(C_{_{[\b 3]}}(\b\rho^{1}, \b\rho^{2}, \dots, \b\rho^{N})\right)}v
=T_{_{n_3}}^{-1}\!\!\left(\!\!\frac{T_{_{n_3}}^{-1}\!\!\left(C_{_{[\b 3]}}(\b{\b\rho}^{1}, \b{\b\rho}^{2}, \dots, \b{\b\rho}^{N})\right)}{C_{_{[\b 3]}}(\b{\b\rho}^{1},\b{\b\rho}^{2}, \dots, \b{\b\rho}^{N})}\!\!\right)\!v,\\
&\t w=\frac{T_{_{n_3}}^{-2}\!\!\left(C_{_{[\b 3]}}(\rho^{1},\rho^{2}, \dots, \rho^{N})\right)}{C_{_{[\b 3]}}(\rho^{1},\rho^{2}, \dots, \rho^{N})}w
=\!\!\prod_{k=1}^{N}\!\!\lambda_k^3T_{_{n_3}}^{-1}\!\!\left(\!\!\frac{C_{_{[\b 3]}}(\b\rho^{1},\b\rho^{2}, \dots, \b\rho^{N})}{T_{_{n_3}}^{-1}\!\!\left(C_{_{[\b 3]}}(\b\rho^{1}, \b\rho^{2}, \dots, \b\rho^{N})\right)}\!\!\right)\!\!w
=\!\!\prod_{k=1}^{N} \lambda_k^3\frac{C_{_{[\b 3]}}(\b{\b\rho}^{1},\b{\b\rho}^{2}, \dots, \b{\b\rho}^{N})}{T_{_{n_3}}^{-1}\!\!\left(C_{_{[\b 3]}}(\b{\b\rho}^{1}, \b{\b\rho}^{2}, \dots, \b{\b\rho}^{N})\right)}w.
\end{align*}
\section{Binary Darboux transformations}\label{sec-4}

It is possible to construct the binary Darboux transformations depending on two eigenfunctions of two different linear systems. We denote the first linear system \eqref{matrix-LP} by $L^{\textcircled{1}}$ and the second linear system \eqref{matrix-LP-2} by $L^{\textcircled{2}}$.  Let $L^{\textcircled{1}}$ and $\widehat{L}^{\textcircled{1}}$ be the linear systems linked to Darboux transformations $\mathrm {DT}^{\bm\Theta}$, $\mathrm {DT}^{\widehat{\bm\Theta}}$ to the same linear system $\widetilde{L}^{\textcircled{1}}$, where $\widehat{L}^{\textcircled{1}}$ is a copy of the same linear system $L^{\textcircled{1}}$ depending on the potentials $(\hat{v}, \hat{w})$ and $\tilde{L}^{\textcircled{1}}$ depending on the potentials $(\tilde{v}, \tilde{w})$ is not necessarily of the same type as $L^{\textcircled{1}}$.   Schematically one can construct a binary Darboux transformation $\mathrm{BDT}$ as a transformation made up of two successive Darboux transformations in the following special manner:
\begin{figure}[H]
\centering
\includegraphics[width=2.0in,height=1.0in]{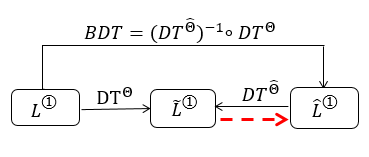}
\caption{Construction of the binary Darboux transformation}
\label{bDT}
\end{figure}
\noindent In order to define $\mathrm {DT}^{\widehat{\bm\Theta}}$ one needs an eigenfunction $\widehat{\bm\Theta}$ of $\widehat{L}^{\textcircled{1}}$. This problem can be got around by using the second linear system $L^{\textcircled{2}}$. If $(\phi,\b\phi, \b{\b\phi})^T$ and $(\psi, \b\psi, \b{\b\psi})^T$ are non-zero vector eigenfunctions of the linear system $L^{\textcircled{1}}$ and $L^{\textcircled{2}}$, respectively, corresponding to spectrum parameter $\lambda$, then we define the eigenfunction potentials $\omega(\phi, \psi)$, $\omega(\b\phi, \b\psi)$, $\omega(\b{\b\phi}, \b{\b\psi})$ as follows
\begin{subequations}\label{lmBSQ-omega}
\begin{align}
\Delta_3(\omega(\phi, \psi))&=\phi T_{_{n_3}}\!(\psi),  \label{lmBSQ-omega-1}\\
\Delta_3(\omega(\b\phi, \b\psi))&=\b\phi T_{_{n_3}}\!(\b{\b\psi}), \label{lmBSQ-omega-2}\\
\Delta_3(\omega(\b{\b\phi}, \b{\b\psi}))&=\b{\b\phi }T_{_{n_3}}\!(\b{\psi}).  \label{lmBSQ-omega-3}
\end{align}
\end{subequations}
From \eqref{lmBSQ-omega}, we have if  $(\theta,\b\theta, \b{\b\theta})^T$ and $(\rho,\b\rho, \b{\b\rho})^T$ are non-zero vector eigenfunctions of the linear system $L^{\textcircled{1}}$ and $L^{\textcircled{2}}$, respectively, corresponding to spectrum parameter $\mu$, then the eigenfunction $\widehat{\bm\Theta}=(\widehat{\theta},\widehat{\b{\theta}}, \widehat{\b{\b{\theta}}})^T$ is given by
\begin{align*}
\widehat{\theta}=\theta\omega(\theta, \rho)^{-1},\\
\widehat{\b\theta}=\b\theta\omega(\b\theta, \b\rho)^{-1},\\
\widehat{\b{\b\theta}}=\b{\b\theta}\omega(\b{\b\theta}, \b{\b\rho})^{-1}.
\end{align*}
We can then construct the binary Darboux transformation (in difference operator form) explicitly as
\begin{align*}
BDT=\left(1 - \theta\omega(\theta, \rho)^{-1}\bigtriangleup_3^{-1}T_{n_3}(\rho), 1 - \b\theta\omega(\b\theta, \b\rho)^{-1}\bigtriangleup_3^{-1}T_{n_3}(\b\rho),1 - \b{\b\theta}\omega(\b{\b\theta}, \b{\b\rho})^{-1}\bigtriangleup_3^{-1}T_{n_3}(\b{\b\rho})\right)^T.
\end{align*}
Analogous to the above procedure, one can construct the binary Darboux transformation for the second linear system $L^{\textcircled{2}}$, denoted by $\mathrm{aBDT}$, as follows
\begin{align*}
aBDT=\left(1 - \rho\omega(\theta, \rho)^{-1}\bigtriangleup_3^{-1}T_{n_3}(\theta), 1 - \b\rho\omega(\b\theta, \b\rho)^{-1}\bigtriangleup_3^{-1}T_{n_3}(\b\theta),1 - \b{\b\rho}\omega(\b{\b\theta}, \b{\b\rho})^{-1}\bigtriangleup_3^{-1}T_{n_3}(\b{\b\theta})\right)^T.
\end{align*}

By using the 3-periodic properties in \eqref{phi-condition} and \eqref{psi-condition}, we have the reduction conditions for $(\omega,\b\omega, \b{\b\omega})^T$ as follows
\begin{subequations}
\begin{align*}
\omega(\phi, \psi)=T_{_{n_3}}^3(\omega(\phi, \psi)),~~\omega(\b\phi, \b\psi))=T_{_{n_3}}(\omega(\phi, \psi)),~~\omega(\b{\b\phi}, \b{\b\psi}))=T_{_{n_3}}^2(\omega(\phi, \psi)),\\
\omega(\b\phi, \b\psi)=T_{_{n_3}}^3(\omega(\b\phi, \b\psi)),~~\omega(\b{\b\phi}, \b{\b\psi}))=T_{_{n_3}}(\omega(\b\phi, \b\psi)),~~\omega({\phi}, \psi))=T_{_{n_3}}^2(\omega(\b\phi, \b\psi)),\\
\omega(\b{\b\phi}, \b{\b\psi})=T_{_{n_3}}^3(\omega(\b{\b\phi}, \b{\b\psi})),~~\omega(\phi, \psi))=T_{_{n_3}}(\omega(\b{\b\phi}, \b{\b\psi})),~~\omega(\b{\phi}, \b{\psi}))=T_{_{n_3}}^2(\omega(\b{\b\phi}, \b{\b\psi})).
\end{align*}
\end{subequations}
Furthermore,
\begin{subequations}
\begin{align*}
\omega(\phi, \rho)=\left(\frac{\mu}{\lambda}\right)^3\!\!T_{_{n_3}}^3(\omega(\phi, \rho)),~~\omega(\b\phi, \b\rho)=\left(\frac{\mu}{\lambda}\right)T_{_{n_3}}(\omega(\phi, \rho)),~~\omega(\b{\b\phi}, \b{\b\rho}))=\left(\frac{\mu}{\lambda}\right)^2T_{_{n_3}}^2(\omega(\phi, \rho)),\\
\omega(\b\phi, \b\rho)=\left(\frac{\mu}{\lambda}\right)^3\!\!T_{_{n_3}}^3(\omega(\b\phi, \b\rho)),~~\omega(\b{\b\phi}, \b{\b\rho})=\left(\frac{\mu}{\lambda}\right)T_{_{n_3}}(\omega(\b\phi, \b\rho)),~~\omega({\phi}, {\rho}))=\left(\frac{\mu}{\lambda}\right)^2T_{_{n_3}}^2(\omega(\b\phi, \b\rho)),\\
\omega(\b{\b\phi}, \b{\b\rho})=\left(\frac{\mu}{\lambda}\right)^3\!\!T_{_{n_3}}^3(\omega(\b{\b\phi}, \b{\b\rho})),~~\omega(\phi, \rho)=\left(\frac{\mu}{\lambda}\right)T_{_{n_3}}(\omega(\b{\b\phi}, \b{\b\rho})),~~\omega(\b{\phi}, \b{\rho}))=\left(\frac{\mu}{\lambda}\right)^2T_{_{n_3}}^2(\omega(\b{\b\phi}, \b{\b\rho})),
\end{align*}
\end{subequations}
\begin{subequations}
\begin{align*}
\omega(\theta, \psi)=\left(\frac{\lambda}{\mu}\right)^3\!\!T_{_{n_3}}^3(\omega(\theta, \psi)),~~\omega(\b\theta, \b\psi)=\left(\frac{\lambda}{\mu}\right)\!T_{_{n_3}}(\omega(\theta, \psi)),~~\omega(\b{\b\theta}, \b{\b\psi}))=\left(\frac{\lambda}{\mu}\right)^2\!\!T_{_{n_3}}^2(\omega(\theta, \psi)),\\
\omega(\b\theta, \b\psi)=\left(\frac{\lambda}{\mu}\right)^3\!\!T_{_{n_3}}^3(\omega(\b\theta, \b\psi)),~~\omega(\b{\b\theta}, \b{\b\psi})=\left(\frac{\lambda}{\mu}\right)\!T_{_{n_3}}(\omega(\b\theta, \b\psi)),~~\omega({\theta}, {\psi}))=\left(\frac{\lambda}{\mu}\right)^2\!\!T_{_{n_3}}^2(\omega(\b\theta,\b\psi)),\\
\omega(\b{\b\theta}, \b{\b\psi})=\left(\frac{\lambda}{\mu}\right)^3\!\!T_{_{n_3}}^3(\omega(\b{\b\theta}, \b{\b\psi})),~~\omega(\theta, \psi)=\left(\frac{\lambda}{\mu}\right)\!T_{_{n_3}}(\omega(\b{\b\theta}, \b{\b\psi})),~~\omega(\b{\theta}, \b{\psi}))=\left(\frac{\lambda}{\mu}\right)^2\!\!T_{_{n_3}}^2(\omega(\b{\b\theta}, \b{\b\psi})),
\end{align*}
\end{subequations}
and
\begin{subequations}
\begin{align*}
\omega(\theta^k, \rho^l)=\left(\frac{\lambda_l}{\lambda_k}\right)^3\!\!T_{_{n_3}}^3(\omega(\theta^k, \rho^l)),~~\omega(\b\theta^k, \b\rho^l))=\left(\frac{\lambda_l}{\lambda_k}\right)\!\!T_{_{n_3}}(\omega(\theta^k, \rho^l)),~~\omega(\b{\b\theta}^k, \b{\b\rho}^l))=\left(\frac{\lambda_l}{\lambda_k}\right)^2\!\!T_{_{n_3}}^2(\omega(\theta^k, \rho^l)),\\
\omega(\b\theta^k, \b\rho^l)=\left(\frac{\lambda_l}{\lambda_k}\right)^3\!\!T_{_{n_3}}^3(\omega(\b\theta^k, \b\rho^l)),~~\omega(\b{\b\theta}^k, \b{\b\rho}^l))=\left(\frac{\lambda_l}{\lambda_k}\right)\!\!T_{_{n_3}}(\omega(\b\theta^k, \b\rho^l)),~~\omega({\theta^k}, \rho^l))=\left(\frac{\lambda_l}{\lambda_k}\right)^2\!\!T_{_{n_3}}^2(\omega(\b\theta^k, \b\rho^l)),\\
\omega(\b{\b\theta}^k, \b{\b\rho}^l)=\left(\frac{\lambda_l}{\lambda_k}\right)^3\!\!T_{_{n_3}}^3(\omega(\b{\b\theta}^k, \b{\b\rho}^l)),~~\omega(\theta^k, \rho^l))=\left(\frac{\lambda_l}{\lambda_k}\right)\!\!T_{_{n_3}}(\omega(\b{\b\theta}^k, \b{\b\rho}^l)),~~\omega(\b{\theta}^k, \b{\rho}^l))=\left(\frac{\lambda_l}{\lambda_k}\right)^2\!\!T_{_{n_3}}^2(\omega(\b{\b\theta}^k, \b{\b\rho}^l)).
\end{align*}
\end{subequations}
The above expressions are useful when we check the binary Darboux transformation.

The following proposition gives the binary Darboux transformation of the discrete modified Boussinesq equation.

\begin{prop}\label{binary-lmBSQ}
For some $(v, w)$, let $(\theta,\b\theta, \b{\b\theta})^T$ and $(\phi,\b\phi, \b{\b\phi})^T$ be two non-zero vector solutions of the linear system \eqref{matrix-LP} , respectively, corresponding to spectrum parameters $\mu$ and $\lambda$; $(\rho,\b\rho, \b{\b\rho})^T$ and $(\psi, \b\psi, \b{\b\psi})^T$ be two non-zero vector solutions of the linear system \eqref{matrix-LP-2}, respectively, corresponding to spectrum parameters $\mu$ and $\lambda$, then
\begin{subequations}\label{lmBSQ-bDT1}
\begin{align}
\mathrm{BDT}:&~\phi\rightarrow\widehat\phi=\phi-\theta \omega(\theta, \rho)^{-1}\omega(\phi,\rho),~~\b\phi\rightarrow\widehat{\b\phi}=\b\phi-\b\theta \omega(\b\theta, \b\rho)^{-1}\omega(\b\phi,\b\rho),~~\b{\b\phi}\rightarrow\widehat{\b{\b\phi}}=\b{\b\phi}-\b{\b\theta} \omega(\b{\b\theta}, \b{\b\rho})^{-1}\omega(\b{\b\phi},\b{\b\rho}),\label{lmBSQ-bDT-1}\\
&~~v\rightarrow\widehat{v}=\frac{T_{_{n_3}}^2\!(\omega(\theta, \rho))}{\omega(\theta, \rho)}v,\quad w\rightarrow\widehat{w}=\frac{T_{_{n_3}}\!(\omega(\theta, \rho))}{\omega(\theta, \rho)}w
\end{align}
\end{subequations}
and
\begin{subequations}\label{lmBSQ-bDT2}
\begin{align}
\mathrm{aBDT}:&~\psi\rightarrow\widehat{\psi}\psi-\rho \omega(\theta, \rho)^{-1}\omega(\theta,\psi),~~\b\psi\rightarrow\widehat{\b\psi}=\b\psi-\b\rho\omega(\b\theta, \b\rho)^{-1}\omega(\b\theta,\b\psi),~~\b{\b\psi}\rightarrow\widehat{\b{\b\psi}}=\b{\b\psi}-\b{\b\rho}\omega(\b{\b\theta}, \b{\b\rho})^{-1}\omega(\b{\b\theta},\b{\b\psi}),\label{lmBSQ-bDT-2}\\
&~~v\rightarrow\widehat{v}=\frac{T_{_{n_3}}^2\!(\omega(\theta, \rho))}{\omega(\theta, \rho)}v,\quad w\rightarrow\widehat{w}=\frac{T_{_{n_3}}\!(\omega(\theta, \rho))}{\omega(\theta, \rho)}w,
\end{align}
\end{subequations}
leave \eqref{matrix-LP} and \eqref{matrix-LP-2} invariant,  respectively.
\end{prop}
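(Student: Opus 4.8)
The plan is to read \eqref{lmBSQ-bDT1}--\eqref{lmBSQ-bDT2} as the composite of the two elementary Darboux transformations of Section~\ref{sec-3} (the schematic of Figure~\ref{bDT}) and to verify it by direct substitution. The first task is to pin down the difference relations obeyed by the eigenfunction potentials. The definitions \eqref{lmBSQ-omega} fix only the $n_3$-behaviour of $\omega$, so I would first adjoin the analogous relations in the $n_1$- and $n_2$-directions, whose bilinear right-hand sides are dictated by the entries of $\bm L,\bm M$ in \eqref{matrix-LP} and $\bm U,\bm V$ in \eqref{matrix-LP-2}. Since $\phi$ solves $L^{\textcircled{1}}$ and $\rho$ solves $L^{\textcircled{2}}$ at a common spectral parameter, these two linear problems are precisely what is needed to make the three updates $\Delta_i\omega$ mutually compatible, i.e.\ $\Delta_i\Delta_j\omega=\Delta_j\Delta_i\omega$ for $i,j\in\{1,2,3\}$; this closedness is what guarantees that $\omega(\theta,\rho)$, $\omega(\phi,\rho)$, $\omega(\theta,\psi)$ and their barred relatives are well-defined lattice functions.

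With these relations available I would verify the first group \eqref{lmBSQ-bDT-1} by substitution into \eqref{matrix-LP}. Writing $\widehat{\bm\Phi}=(\widehat\phi,\widehat{\b\phi},\widehat{\b{\b\phi}})^T$ and letting $\widehat{\bm L},\widehat{\bm M}$ be the matrices of \eqref{matrix-LP} with $(v,w)$ replaced by $(\widehat v,\widehat w)$, the claim is $\widehat{\bm\Phi}_1=\widehat{\bm L}\,\widehat{\bm\Phi}$ and $\widehat{\bm\Phi}_2=\widehat{\bm M}\,\widehat{\bm\Phi}$. The off-diagonal $\lambda$-entries are matched by first checking that $(\widehat\phi,\widehat{\b\phi},\widehat{\b{\b\phi}})$ again satisfies the $3$-periodic relations \eqref{3-periodic-cond-1}, which follows from the scaling laws such as $\omega(\b\phi,\b\rho)=(\mu/\lambda)\,T_{n_3}(\omega(\phi,\rho))$ recorded before the proposition. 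The diagonal, potential-carrying entries are then matched by eliminating the shifts $T_{n_1}\omega(\phi,\rho)$ and $T_{n_1}\omega(\theta,\rho)$ through the $n_1$-difference relations of the first paragraph and by using $\widehat w=T_{n_3}(\omega(\theta,\rho))\,\omega(\theta,\rho)^{-1}w$. Conceptually this is just the composite $(\mathrm{DT}^{\widehat{\bm\Theta}})^{-1}\!\circ\mathrm{DT}^{\bm\Theta}$ of Figure~\ref{bDT}: $\mathrm{DT}^{\bm\Theta}$ is Proposition~\ref{prop1-dpmBSQ}, the vector $\widehat{\bm\Theta}=(\theta\,\omega(\theta,\rho)^{-1},\b\theta\,\omega(\b\theta,\b\rho)^{-1},\b{\b\theta}\,\omega(\b{\b\theta},\b{\b\rho})^{-1})^T$ supplies the eigenfunction required for the second step, and the explicit closed form is recovered by telescoping the two successive potential updates, the bare $\theta$-factors cancelling against those inside $\widehat{\bm\Theta}$ to leave the ratios of $\omega(\theta,\rho)$ displayed in \eqref{lmBSQ-bDT1}.

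The statement \eqref{lmBSQ-bDT2} for $\mathrm{aBDT}$ is the mirror image: I would substitute $\widehat{\bm\Psi}=(\widehat\psi,\widehat{\b\psi},\widehat{\b{\b\psi}})^T$ into \eqref{matrix-LP-2}, now invoking the backward ($\b1,\b2$) difference relations for $\omega(\theta,\psi)$ and $\omega(\theta,\rho)$ and the scaling law $\omega(\b\theta,\b\psi)=(\lambda/\mu)\,T_{n_3}(\omega(\theta,\psi))$. The one point deserving real attention is that $\mathrm{BDT}$ and $\mathrm{aBDT}$ must induce the \emph{same} potential update $(\widehat v,\widehat w)$, built from the single potential $\omega(\theta,\rho)$; this is the hallmark of a genuine binary transformation, and it is confirmed by expressing both updates through $\omega(\theta,\rho)$ and appealing to its $3$-periodic relations $\omega(\theta,\rho)=T_{n_3}^3(\omega(\theta,\rho))$, $\omega(\b\theta,\b\rho)=T_{n_3}(\omega(\theta,\rho))$, and the like.

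The main obstacle is the closedness step of the first paragraph. Because the two linear systems \eqref{matrix-LP} and \eqref{matrix-LP-2} are \emph{not} adjoint to one another, as stressed in the Remark following \eqref{compatibility-matrix-LP-2}, the existence of the eigenfunction potentials cannot be inferred from the usual bilinear-concomitant or conservation-law argument available in the adjoint (continuous) setting. Instead one must check by hand that the mixed second differences of $\omega$ agree, using the precise entries of $\bm L,\bm M,\bm U,\bm V$ rather than any abstract duality. Once this compatibility is secured, the invariance of \eqref{matrix-LP} and \eqref{matrix-LP-2} follows from the row-by-row verification sketched above, which is lengthy but routine.
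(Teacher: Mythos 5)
Your plan coincides with the paper's own route: the paper constructs $\mathrm{BDT}$ precisely as the composite $(\mathrm{DT}^{\widehat{\bm\Theta}})^{-1}\circ\mathrm{DT}^{\bm\Theta}$ of Figure~\ref{bDT} with $\widehat{\bm\Theta}=\bigl(\theta\,\omega(\theta,\rho)^{-1},\ \b\theta\,\omega(\b\theta,\b\rho)^{-1},\ \b{\b\theta}\,\omega(\b{\b\theta},\b{\b\rho})^{-1}\bigr)^T$, and it establishes Proposition~\ref{binary-lmBSQ} only by declaring it the $N=1$ base case of the induction in Proposition~\ref{N-binary-lmBSQ}, i.e.\ by exactly the direct substitution and telescoping of the two potential updates that you describe (your cancellation of the bare $\theta$-factors does reproduce $\widehat v=T_{n_3}^2(\omega(\theta,\rho))\,\omega(\theta,\rho)^{-1}v$ and $\widehat w=T_{n_3}(\omega(\theta,\rho))\,\omega(\theta,\rho)^{-1}w$). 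Where you go beyond the paper is in insisting that the $n_1$- and $n_2$-difference relations for $\omega$ be written down and checked compatible with \eqref{lmBSQ-omega}; the paper never records these relations, so your identification of this as the genuinely non-trivial step --- unavoidable here precisely because \eqref{matrix-LP} and \eqref{matrix-LP-2} are not adjoint to one another --- is a worthwhile supplement rather than a deviation. One minor slip: $\phi$ and $\rho$ carry the distinct spectral parameters $\lambda$ and $\mu$ rather than a common one; this is harmless, since the quasi-periodicity laws with the $(\mu/\lambda)$ factors quoted before the proposition already account for it.
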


The proof of \textbf{Prop.\ref{binary-lmBSQ}} is same with that of \textbf{Prop.\ref{N-binary-lmBSQ}} when $N=1$.  So we do not give details here.

Through the 3-periodic properties of $(\theta, \b\theta, \b{\b\theta})$, $(\phi, \b\phi, \b{\b\phi})$, $(\rho, \b\rho, \b{\b\rho})$ and $(\psi, \b\psi, \b{\b\psi})$, we can obtain the 3-periodic relationships for the new eigenfunctions $(\widehat\phi, \widehat{\b\phi}, \widehat{\b{\b\phi}})$ and $(\widehat\psi, \widehat{\b\psi}, \widehat{\b{\b\psi}})$ as follows
\begin{subequations}\label{3-periodic-cond-4}
\begin{align}
\h\phi\!=\!\lambda^{-2}T_{_{n_3}}^2\!\!\left(\h{\b\phi}\right)\!=\!\lambda^{-1}T_{_{n_3}}\!\!\left(\h{\b{\b\phi}}\right),\quad
&\h\psi\!=\!\lambda^{-2}T_{_{n_3}}^{-2}\!\!\left(\h{\b\psi}\right)\!=\!\lambda^{-1}T_{_{n_3}}^{-1}\!\!\left(\h{\b{\b\psi}}\right), \\
\h{\b\phi}\!=\!\lambda^{-1}T_{_{n_3}}\!\!\left(\h\phi\right)\!=\!\lambda^{-2}T_{_{n_3}}^2\!\!\left(\h{\b{\b\phi}}\right),\quad
&\h{\b\psi}\!=\!\lambda^{-1}T_{_{n_3}}^{-1}\!\!\left(\h\psi\right)\!=\!\lambda^{-2}T_{_{n_3}}^{-2}\!\!\left(\h{\b{\b\psi}}\right),\\
\h{\b{\b\phi}}\!=\!\lambda^{-2}T_{_{n_3}}^2\!\!\left(\h\phi\right)\!=\!\lambda^{-1}T_{_{n_3}}\!\!\left(\h{\b\phi}\right),\quad
&\h{\b{\b\psi}}\!=\!\lambda^{-2}T_{_{n_3}}^{-2}\!\!\left(\h\psi\right)\!=\!\lambda^{-1}T_{_{n_3}}^{-1}\!\!\left(\h{\b\psi}\right),
\end{align}
\end{subequations}
which means the 3-periodic properties are preserved under the binary Darboux transformation \eqref{lmBSQ-bDT-1} and \eqref{lmBSQ-bDT-2}.  Furthermore, for the new potentials $(\widehat v, \widehat w)$, we have
\begin{align*}
\widehat v=\frac{T_{_{n_3}}^2\!(\omega(\theta, \rho))}{\omega(\theta, \rho)}v=T_{_{n_3}}\left(\frac{\omega(\b{\theta}, \b{\rho})}{T_{_{n_3}}\!(\omega(\b{\theta}, \b{\rho}))}\right)v=\frac{\omega(\b{\b\theta}, \b{\b\rho})}{T_{_{n_3}}\!(\omega(\b{\b\theta}, \b{\b\rho}))}v,\\
\widehat w=\frac{T_{_{n_3}}\!(\omega(\theta, \rho))}{\omega(\theta, \rho)}w=\frac{\omega(\b\theta, \b\rho)}{T_{_{n_3}}^2\!(\omega(\b\theta, \b\rho))}w=T_{_{n_3}}\left(\frac{T_{_{n_3}}\!(\omega(\b{\b\theta}, \b{\b\rho}))}{\omega(\b{\b\theta}, \b{\b\rho})}\right)w.
\end{align*}

The $N$-fold iteration of these binary Darboux transformations are given below.

\begin{prop}\label{N-binary-lmBSQ}
Let $(\theta^{1}\!, \b\theta^{1}, \b{\b\theta}^1)^T\!, (\theta^{2}\!,\b\theta^{2}, \b{\b\theta}^2)^T\!,  \dots,  (\theta^{N}\!, \b\theta^{N}, \b{\b\theta}^N)^T\!$ and $(\rho^{1}\!, \b\rho^{1}, \b{\b\rho}^1)^T\!, (\rho^{2}\!,\b\rho^{2}, \b{\b\rho}^2)^T\!,  \dots,  (\rho^{N}\!, \b\rho^{N}, \b{\b\rho}^N)^T\!$, having the 3-periodic properties  $\theta^{k}\!\!=\!\!\lambda_k\!^{-3}T_{_{n_3}}\!\!\!\!^3(\theta^{k})$, $\b{\b\theta}^{k}\!\!\!=\!\!\lambda_k\!^{-2}T_{_{n_3}}\!\!\!\!^2(\theta^{k})$, $\b\theta^{k}\!\!=\!\!\lambda_k\!^{-1}T_{_{n_3}}(\theta^{k})$ and $\rho^{k}\!=\!\lambda_k^{-3}T_{_{n_3}}^{-3}(\rho^{k}), \b{\b\rho}^{k}\!=\!\lambda_k^{-2}T_{_{n_3}}^{-2}(\rho^{k}),\b \rho^{k}\!=\!\lambda_k^{-1}T_{_{n_3}}^{-1}(\rho^{k})$,  with $\lambda=\lambda_k$, $k=1, 2, \dots, N$, be $N$ independent non-zero vector eigenfunctions of the linear systems \eqref{matrix-LP} and \eqref{matrix-LP-2} for some $(v, w)$. Then
\begin{subequations}
\begin{align}\label{N-binary-lmBSQ-1}
&\phi\!\rightarrow\!\h\phi\!=\!\begin{vmatrix}
\bm\Omega(\bm\theta, \bm \rho^T\!) &\!\!\! \bm\theta\\
\bm\Omega(\phi, \bm \rho^T) & \!\!\!\phi \\
\end{vmatrix}
\!|\bm\Omega(\bm\theta, \bm \rho^T\!) |^{-1}\!,
~
\b\phi\!\rightarrow\!\h{\b\phi}\!=\!\begin{vmatrix}
\bm\Omega(\bm{\b\theta}, \bm{\b \rho}^T\!) &\!\! \!\bm{\b\theta}\\
\bm\Omega(\b\phi, \bm {\b\rho}^T\!) &\!\! \!\b\phi \\
\end{vmatrix}
\!|\bm\Omega(\bm{\b\theta}\!, \bm {\b\rho^T}) |^{-1}\!,
~
\b{\b\phi}\!\rightarrow\!\h{\b{\b\phi}}\!=\!\begin{vmatrix}
\bm\Omega(\bm{\b{\b\theta}}, \bm{\b{\b \rho}}^T\!) &\!\! \!\bm{\b{\b\theta}}\\
\bm\Omega(\b{\b\phi}, \bm {\b{\b\rho}}^T\!) &\!\! \!\b{\b\phi} \\
\end{vmatrix}
\!|\bm\Omega(\bm{\b{\b\theta}}\!, \bm {\b{\b\rho}^T}) |^{-1}\!,
\\
&v\!\rightarrow\!\h v\!=\!\frac{|T_{_{n_3}}^2\!\!\!\left(\bm\Omega(\bm\theta, \bm \rho^T\!)\right)\!\!|}{|\bm\Omega(\bm\theta, \bm \rho^T\!)\!|}v,\quad
w\!\rightarrow\!\h w\!=\!\frac{|T_{_{n_3}}\!\!\!\left(\bm\Omega(\bm\theta, \bm \rho^T\!)\right)\!\!|}{|\bm\Omega(\bm\theta, \bm \rho^T\!)\!|}w,
\end{align}
\end{subequations}
and
\begin{subequations}
\begin{align}\label{N-binary-lmBSQ-2}
&\psi\!\rightarrow\!\h\psi\!=\!\begin{vmatrix}
\bm\Omega(\bm\theta^T, \bm \rho) &\!\!\! \bm\rho\\
\bm\Omega(\bm\theta^T, \psi) &\! \!\!\psi \\
\end{vmatrix}
\!|\bm\Omega(\bm\theta^T\!\!\!, \bm \rho) |^{-1},
~
\b\psi\!\rightarrow\!\h{\b\psi}\!=\!\begin{vmatrix}
\bm\Omega(\bm{\b\theta}^T, \bm{\b \rho}) &\!\!\! \bm{\b\rho}\\
\bm\Omega(\bm {\b\theta}^T,\b\psi) &\!\!\! \b\psi \\
\end{vmatrix}
\!|\bm\Omega(\bm{\b\theta}^T\!\!\!, \bm {\b\rho}) |^{-1},
~
\b{\b\psi}\!\rightarrow\!\h{\b{\b\psi}}\!=\!\begin{vmatrix}
\bm\Omega(\bm{\b{\b\theta}}^T, \bm{\b {\b\rho}}) &\!\!\! \bm{\b{\b\rho}}\\
\bm\Omega(\bm {\b{\b\theta}}^T,\b{\b\psi}) &\!\!\! \b{\b\psi} \\
\end{vmatrix}
\!|\bm\Omega(\bm{\b{\b\theta}}^T\!\!\!, \bm {\b{\b\rho}}) |^{-1},
\\
&v\!\rightarrow\!\h v\!=\!\frac{|T_{_{n_3}}^2\!\!\!\left(\bm\Omega(\bm\theta^T, \bm \rho\!)\right)\!\!|}{|\bm\Omega(\bm\theta^T, \bm \rho\!)\!|}v,\quad
w\!\rightarrow\!\h w\!=\!\frac{|T_{_{n_3}}\!\!\!\left(\bm\Omega(\bm\theta^T, \bm \rho\!)\right)\!\!|}{|\bm\Omega(\bm\theta^T, \bm \rho\!)\!|}w,
\end{align}
\end{subequations}
leave \eqref{matrix-LP} and \eqref{matrix-LP-2}invariant, respectively, where vectors $\bm\theta=(\theta^1,\dots,\theta^N)^T$, $\bm\rho=(\rho^1,\dots,\rho^N)^T$, $\bm\Omega(\bm\theta, \bm \rho^T)=(\omega(\theta^{(i)}, \rho^{(j)}))_{i,j=1,\dots,N}$, $\bm\Omega(\bm\theta^T, \bm\rho)=\bm\Omega(\bm\theta, \bm \rho^T)^T$ are $N\times N$ matrices, $\bm\Omega(\phi, \bm \rho^T)=(\omega(\phi,\rho^{(j)}))_{j=1,\dots,N}$ and $\bm\Omega(\bm \theta^T,\psi)=(\omega(\theta^{(i)},\psi))_{i=1,\dots,N}$ are $N$-row vectors, and same for the $\overline{\cdot}$ and $\b{\b\cdot}$ cases.
\end{prop}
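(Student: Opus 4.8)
The plan is to verify the invariance by a direct computation with the bordered Grammian determinants, in the same spirit as the proof of \textbf{Prop.\ref{prop1N-dpmBSQ}}. Writing $\tau:=|\bm\Omega(\bm\theta, \bm\rho^T)|$ for the $N\times N$ Grammian, the assertion that $(\h\phi, \h{\b\phi}, \h{\b{\b\phi}})^T$ solves \eqref{matrix-LP} with $(\h v, \h w)$ in place of $(v,w)$ is equivalent to a short list of bilinear identities relating $\tau$, its forward shifts $T_{n_\kappa}(\tau)$ and $T_{n_3}(\tau)$ ($\kappa=1,2$), and the bordered $(N{+}1)\times(N{+}1)$ determinants that appear in \eqref{N-binary-lmBSQ-1}. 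The potential formulas for $\h v,\h w$ are exactly the ratios $T_{n_3}^2(\tau)/\tau$ and $T_{n_3}(\tau)/\tau$, so the entire content is to check these eigenfunction identities.

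The key preliminary is to determine how $\bm\Omega(\bm\theta, \bm\rho^T)$ responds to the shifts $T_{n_1}$ and $T_{n_2}$. The $n_3$-difference of each entry is fixed by the defining relation \eqref{lmBSQ-omega-1}, which exhibits $\Delta_3\,\omega(\theta^i,\rho^j)$ as the product $\theta^i\,T_{n_3}(\rho^j)$. I would first show that, because $\theta^i$ solves \eqref{matrix-LP} and $\rho^j$ solves \eqref{matrix-LP-2} at the common parameter $\lambda_k$, the same bilinear combination is a closed discrete form, so that first-order difference relations of the same product type hold in the $n_1$ and $n_2$ directions as well. The upshot I want to record is that, after extracting the scalar prefactors dictated by the two linear systems, each shift acts on $\bm\Omega$ as a \emph{rank-one update} $\bm a_\kappa\bm b_\kappa^{T}$ built from the eigenfunction vectors; the $n_3$-periodic reductions listed just before the proposition then identify the $\b{\cdot}$ and $\b{\b\cdot}$ variants of $\bm\Omega$ with the $T_{n_3}^{\pm}$-shifts of the basic Grammian.

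With this rank-one structure in hand, I would apply $T_{n_1}$ and $T_{n_2}$ to the bordered determinants of \eqref{N-binary-lmBSQ-1} and collapse the outcome using the matrix-determinant lemma together with the Desnanot--Jacobi identity relating a determinant, its first minors, and the determinant bordered by the column $\bm\theta$ and the row $\bm\Omega(\phi,\bm\rho^{T})$. Each required bilinear identity -- the Grammian analogue of \eqref{bilinear-1}, \eqref{bilinear-2}, \eqref{bilinear-3} -- then reduces to the vanishing of an auxiliary determinant that contains two coincident blocks, exactly as in the Laplace-expansion argument that closed the proof of \textbf{Prop.\ref{prop1N-dpmBSQ}}. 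The relations for $\h{\b\phi}$ and $\h{\b{\b\phi}}$ follow by the cyclic substitution $\phi\!\to\!\b\phi\!\to\!\b{\b\phi}$ combined with the preserved $n_3$-periodicity \eqref{3-periodic-cond-4}, and the $\mathrm{aBDT}$ half leaving \eqref{matrix-LP-2} invariant is obtained from the reflection $n_i\to-n_i$ that interchanges the two linear systems and transposes $\bm\Omega$, which is precisely why \eqref{N-binary-lmBSQ-2} is the transpose-mirror of \eqref{N-binary-lmBSQ-1}.

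The main obstacle is the second step. Since, as the remark emphasizes, \eqref{matrix-LP-2} is \emph{not} the adjoint of \eqref{matrix-LP}, the conservation law that makes $\omega(\theta^i,\rho^j)$ a well-defined potential in all three directions is not the standard one and must be produced directly from the two non-adjoint systems; getting its $n_1$- and $n_2$-difference into clean product form, with the correct scalar prefactors, is where the real work lies. Once the rank-one update and the Desnanot--Jacobi identity are in place, the remaining verification is essentially bookkeeping, the only delicate point being to match the diagonal prefactors generated by the shifts against the entries of $\bm L$ and $\bm M$ with $(v,w)$ replaced by $(\h v,\h w)$.
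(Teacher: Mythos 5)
Your strategy is genuinely different from the paper's. The paper proves this proposition by induction on $N$: it realises the $N$-fold transformation as $N$ successive applications of the one-step binary Darboux transformation of Prop.~\ref{binary-lmBSQ}, establishes the key lemma \eqref{omega-formula} that the once-transformed potential $\omega(\phi[k],\psi[k])$ is itself a bordered Grammian ratio (proved by differencing in $n_3$ only, via the product rule for $\Delta_n$ acting on $CA^{-1}B$), and then collapses the composition into the closed $(k{+}1)$-bordered form with the Jacobi determinant identity. You instead propose to verify the closed $N$-fold formulas directly against the Lax pair \eqref{matrix-LP}. That is a legitimate and in some ways more self-contained route --- it would in fact repair a slight circularity in the paper, where the base case $N=1$ is discharged by reference back to Prop.~\ref{N-binary-lmBSQ} itself.

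As written, however, your proposal has a genuine gap precisely at the step you flag as ``where the real work lies'', and it cannot be deferred: the whole direct-verification route stands or falls on producing the first-order relations $\Delta_\kappa\,\omega(\theta^i,\rho^j)=(\text{scalar})\cdot(\text{product of eigenfunctions})$ for $\kappa=1,2$, proving them compatible with the defining relation \eqref{lmBSQ-omega-1} in the $n_3$ direction (so that $\omega$ is well defined at all), and checking that the resulting rank-one updates of $\bm\Omega$ carry exactly the prefactors $a_\kappa w_\kappa w^{-1}$, etc., that appear in $\bm L$ and $\bm M$ after the replacement $(v,w)\to(\h v,\h w)$. Without these formulas the matrix-determinant-lemma and Desnanot--Jacobi steps cannot even be set up, so the argument is at present a plan rather than a proof. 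A second, smaller gap: you propose to obtain the barred identities by ``cyclic substitution'' and the $\mathrm{aBDT}$ half by the reflection $n_i\to-n_i$ interchanging the two systems. But the definitions \eqref{lmBSQ-omega-2}--\eqref{lmBSQ-omega-3} are twisted ($\Delta_3\,\omega(\b\phi,\b\psi)=\b\phi\,T_{n_3}(\b{\b\psi})$, not $\b\phi\,T_{n_3}(\b\psi)$), and the paper's own remark stresses that \eqref{matrix-LP-2} is \emph{not} the adjoint of \eqref{matrix-LP}, so neither a naive cyclic substitution nor a reflection symmetry can be invoked without explicit verification; the paper instead derives the barred cases by applying $\lambda^{-1}T_{n_3}$ and $\lambda^{-2}T_{n_3}^2$ (respectively their inverses) to the unbarred identity and using the preserved 3-periodicity \eqref{3-periodic-cond-4}.
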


\begin{proof}
The proof is by induction.  Let $(\theta^{1}\!, \b\theta^{1}\!, \b{\b\theta}^{1}\!)^T\!\!=\!\!(\theta[0], \b\theta[0], \b{\b\theta}[0])^T, (\theta^{2}\!,\b\theta^{2}\!, \b{\b\theta}^2\!)^T,  \dots,  (\theta^{N}\!, \b\theta^{N}\!,\b{\b\theta}^{N}\!)^T$ and $(\rho^{1}\!, \b\rho^{1}\!, \b{\b\rho}^{1}\!)^T\!\!=\!\!(\rho[0], \b\rho[0], \b{\b\rho}[0])^T, (\rho^{2}\!, \b\rho^{2}\!, \b{\b\rho}^{2})^T, \dots,  (\rho^{N}\!, \b\rho^{N}\!, \b{\b\rho}^{N}\!)^T\!$ be vector eigenfunctions of `seed' linear system \eqref{matrix-LP} and \eqref{matrix-LP-2}  respectively for the `seed' potentials $(v, w)=(v[0], w[0])$ and let $\left(\phi, \b\phi, \b{\b\phi}\right)=\left(\phi[0], \b\phi[0], \b{\b\phi}[0]\right)$, $\left(\psi, \b\psi,\b{\b\psi}\right)=\left(\psi[0], \b\psi[0], \b{\b\psi}[0]\right)$ denote arbitrary eigenfunctions.

The $N$th iteration of binary Darboux transformations is via the formulae, $N=1,  2, \dots$,
\begin{subequations}\label{binary-N-DT}
\begin{align}
&\phi[N]=\phi[N-1]-\theta[N-1]\omega(\theta[N-1], \rho[N-1])^{-1}\omega(\phi[N-1], \rho[N-1]), \label{binary-N-DT1}\\
&\b\phi[N]=\b\phi[N-1]-\b\theta[N-1]\omega(\b\theta[N-1], \b\rho[N-1])^{-1}\omega(\b\phi[N-1], \b\rho[N-1]),\label{binary-N-DT2}\\
&\b{\b\phi}[N]=\b{\b\phi}[N-1]-\b{\b\theta}[N-1]\omega(\b{\b\theta}[N-1], \b{\b\rho}[N-1])^{-1}\omega(\b{\b\phi}[N-1], \b{\b\rho}[N-1]), \label{binary-N-DT3}\\
&\psi[N]=\psi[N-1]-\rho[N-1]\omega(\theta[N-1], \rho[N-1])^{-1}\omega(\theta[N-1], \psi[N-1]), \label{binary-N-DT4} \\
&\b\psi[N]=\b\psi[N-1]-\b\rho[N-1]\omega(\b\theta[N-1], \b\rho[N-1])^{-1}\omega(\b\theta[N-1], \b\psi[N-1]), \label{binary-N-DT5}\\
&\b{\b\psi}[N]=\b{\b\psi}[N-1]-\b{\b\rho}[N-1]\omega(\b{\b\theta}[N-1], \b{\b\rho}[N-1])^{-1}\omega(\b{\b\theta}[N-1], \b{\b\psi}[N-1]), \label{binary-N-DT6}\\
&v[N]=\frac{T_{n_3}^2\!\!\left(\omega(\theta[N-1], \rho[N-1])\right)}{\omega\left(\theta[N-1], \rho[N-1]\right)}v[N-1], ~w[N]=\frac{T_{n_3}\!\!\left(\omega(\theta[N-1], \rho[N-1])\right)}{\omega\left(\theta[N-1], \rho[N-1]\right)}w[N-1], \label{potentials}
\end{align}
\end{subequations}
and where
\begin{subequations}
\begin{align}
&\theta[N]=\phi[N]|_{_{\phi\rightarrow \theta^{N+1}}}, ~~\b\theta[N]=\b\phi[N]|_{_{\b\phi \rightarrow \b\theta^{N+1}}},  , ~~\b{\b\theta}[N]=\b{\b\phi}[N]|_{_{\b{\b\phi} \rightarrow \b{\b\theta}^{N+1}}},\label{thetan}\\
&\rho[N]=\psi[N]|_{_{\psi\rightarrow \rho^{N+1}}}, ~~\b\rho[N]=\b\psi[N]|_{_{\b\psi \rightarrow \b\rho^{N+1}}}, ~~\b{\b\rho}[N]=\b{\b\psi}[N]|_{_{\b{\b\psi} \rightarrow \b{\b\rho}^{N+1}}}.
\end{align}
\end{subequations}

For $N=1$, the iterated binary Darboux transformation is the basic form given by \eqref{lmBSQ-bDT1} and \eqref{lmBSQ-bDT2}.

Suppose for $N=k$, the Proposition \ref{N-binary-lmBSQ}  is right, i.e., we have
\begin{align}
&\phi[k]\!=\!\begin{vmatrix}
\bm\Omega(\bm\theta, \bm \rho^T\!) &\!\!\! \bm\theta\\
\bm\Omega(\phi, \bm \rho^T) & \!\!\!\phi \\
\end{vmatrix}
\!|\bm\Omega(\bm\theta, \bm \rho^T\!) |^{-1}\!,
~
{\b\phi[k]}\!=\!\begin{vmatrix}
\bm\Omega(\bm{\b\theta}, \bm{\b \rho}^T\!) &\!\! \!\bm{\b\theta}\\
\bm\Omega(\b\phi, \bm {\b\rho}^T\!) &\!\! \!\b\phi \\
\end{vmatrix}
\!|\bm\Omega(\bm{\b\theta}\!, \bm {\b\rho^T}) |^{-1}\!,
~
{\b{\b\phi}[k]}\!=\!\begin{vmatrix}
\bm\Omega(\bm{\b{\b\theta}}, \bm{\b{\b \rho}}^T\!) &\!\! \!\bm{\b{\b\theta}}\\
\bm\Omega(\b{\b\phi}, \bm {\b{\b\rho}}^T\!) &\!\! \!\b{\b\phi} \\
\end{vmatrix}
\!|\bm\Omega(\bm{\b{\b\theta}}\!, \bm {\b{\b\rho}^T}) |^{-1}\!,  \label{phik}\\
&\psi[k]\!=\!\begin{vmatrix}
\bm\Omega(\bm\theta^T, \bm \rho) &\!\!\! \bm\rho\\
\bm\Omega(\bm\theta^T, \psi) &\! \!\!\psi \\
\end{vmatrix}
\!|\bm\Omega(\bm\theta^T\!\!\!, \bm \rho) |^{-1},
~
{\b\psi[k]}\!=\!\begin{vmatrix}
\bm\Omega(\bm{\b\theta}^T, \bm{\b \rho}) &\!\!\! \bm{\b\rho}\\
\bm\Omega(\bm {\b\theta}^T,\b\psi) &\!\!\! \b\psi \\
\end{vmatrix}
\!|\bm\Omega(\bm{\b\theta}^T\!\!\!, \bm {\b\rho}) |^{-1},
~
{\b{\b\psi}[k]}\!=\!\begin{vmatrix}
\bm\Omega(\bm{\b{\b\theta}}^T, \bm{\b {\b\rho}}) &\!\!\! \bm{\b{\b\rho}}\\
\bm\Omega(\bm {\b{\b\theta}}^T,\b{\b\psi}) &\!\!\! \b{\b\psi }\\
\end{vmatrix}
\!|\bm\Omega(\bm{\b{\b\theta}}^T\!\!\!, \bm {\b{\b\rho}}) |^{-1},\label{psik}\\
~
&v[k]\!=\!\frac{|T_{_{n_3}}^2\!\!\!\left(\bm\Omega(\bm\theta, \bm \rho^T\!)\right)\!\!|}{|\bm\Omega(\bm\theta, \bm \rho^T\!)\!|}v,\quad
w[k]\!=\!\frac{|T_{_{n_3}}\!\!\!\left(\bm\Omega(\bm\theta, \bm \rho^T\!)\right)\!\!|}{|\bm\Omega(\bm\theta, \bm \rho^T\!)\!|}w,
\end{align}
where the column vectors are as below
\begin{align*}
&\bm\theta=(\theta^1, \theta^2, \dots, \theta^k)^T, \quad \bm{\b\theta}=(\b\theta^1, \b\theta^2, \dots, \b\theta^k)^T, \quad \bm{\b{\b\theta}}=(\b{\b\theta}^1, \b{\b\theta}^2, \dots, \b{\b\theta}^k)^T ,\\
&\bm\rho=(\rho^1, \rho^2, \dots, \rho^k)^T, \quad \bm{\b\rho}=(\b\rho^1, \b\rho^2, \dots, \b\rho^k)^T,  \quad \bm{\b{\b\rho}}=(\b{\b\rho}^1, \b{\b\rho}^2, \dots, \b{\b\rho}^k)^T,
\end{align*}
and the $k \times k$ matrices and $1 \times k$ row vectors are as follows
\begin{align*}
&\bm\Omega(\bm\theta, \bm\rho^T)
\!=\!\left(\begin{matrix}
\omega(\theta^1, \rho^1) &\omega(\theta^1, \rho^2) & \cdots & \omega(\theta^1, \rho^k)\\
\omega(\theta^2, \rho^1) &\omega(\theta^2, \rho^2) & \cdots & \omega(\theta^2, \rho^k)\\
\vdots&\vdots & \cdots &\vdots\\
\omega(\theta^k, \rho^1) &\omega(\theta^k, \rho^2) & \cdots & \omega(\theta^k, \rho^k)\\
\end{matrix}
\right)
,~\bm\Omega(\bm\theta^T, \bm\rho)=\bm\Omega(\bm\theta, \bm\rho^T)^T, \\
&
\bm\Omega(\bm{\b\theta}, \bm{\b\rho}^T)
\!=\!\!\left(\begin{matrix}
\omega(\b\theta^1, \b\rho^1) &\omega(\b\theta^1, \b\rho^2) & \cdots & \omega(\b\theta^1,\b \rho^k)\\
\omega(\b\theta^2,\b \rho^1) &\omega(\b\theta^2, \b\rho^2) & \cdots & \omega(\b\theta^2, \b\rho^k)\\
\vdots&\vdots & \cdots &\vdots\\
\omega(\b\theta^k, \b\rho^1) &\omega(\b\theta^k, \b\rho^2) & \cdots & \omega(\b\theta^k, \b\rho^k)\\
\end{matrix}
\right),~
\bm\Omega(\bm{\b\theta}^T, \bm{\b\rho})=\bm\Omega(\bm{\b\theta}, \bm{\b\rho}^T)^T,\\
&
\bm\Omega(\bm{\b{\b\theta}}, \bm{\b{\b\rho}}^T)
\!=\!\!\left(\begin{matrix}
\omega(\b{\b\theta}^1, \b\rho^1) &\omega(\b{\b\theta}^1, \b{\b\rho}^2) & \cdots & \omega(\b{\b\theta}^1,\b{\b \rho}^k)\\
\omega(\b{\b\theta}^2,\b {\b\rho}^1) &\omega(\b{\b\theta}^2, \b{\b\rho}^2) & \cdots & \omega(\b{\b\theta}^2, \b{\b\rho}^k)\\
\vdots&\vdots & \cdots &\vdots\\
\omega(\b{\b\theta}^k, \b{\b\rho}^1) &\omega(\b{\b\theta}^k, \b{\b\rho}^2) & \cdots & \omega(\b{\b\theta}^k, \b{\b\rho}^k)\\
\end{matrix}
\right),~
\bm\Omega(\bm{\b{\b\theta}}^T, \bm{\b{\b\rho}})=\bm\Omega(\bm{\b{\b\theta}}, \bm{\b{\b\rho}}^T)^T,\\
&\bm\Omega(\phi, \bm\rho^T)
=\left(\begin{matrix}
\omega(\phi, \rho^1) &\omega(\phi, \rho^2) & \cdots & \omega(\phi, \rho^k)\\
\end{matrix}
\right)
,~
\bm\Omega(\bm\theta^T, \psi )
=\left(\begin{matrix}
\omega(\theta^1, \psi) &\omega(\theta^2, \psi) & \cdots & \omega(\theta^k, \psi)\\
\end{matrix}
\right)
,\\
&
\bm\Omega(\b\phi, \bm{\b\rho}^T)
=\left(\begin{matrix}
\omega(\b\phi, \b\rho^1) &\omega(\b\phi, \b\rho^2) & \cdots & \omega(\b\phi, \b\rho^k)\\
\end{matrix}
\right)
,~
\bm\Omega(\bm{\b\theta}^T, \b\psi )
=\left(\begin{matrix}
\omega(\b\theta^1, \b\psi) &\omega(\b\theta^2, \b\psi) & \cdots & \omega(\b\theta^k, \b\psi)\\
\end{matrix}
\right),\\
&
\bm\Omega(\b{\b\phi}, \bm{\b{\b\rho}}^T)
=\left(\begin{matrix}
\omega(\b{\b\phi}, \b{\b\rho}^1) &\omega(\b{\b\phi}, \b{\b\rho}^2) & \cdots & \omega(\b{\b\phi}, \b{\b\rho}^k)\\
\end{matrix}
\right)
,~
\bm\Omega(\bm{\b{\b\theta}}^T, \b{\b\psi} )
=\left(\begin{matrix}
\omega(\b{\b\theta}^1, \b{\b\psi}) &\omega(\b{\b\theta}^2, \b{\b\psi}) & \cdots & \omega(\b{\b\theta}^k, \b{\b\psi})\\
\end{matrix}
\right).
\end{align*}
Moerover, we have
\begin{subequations}\label{omega-formula}
\begin{align}
&\omega(\phi[k], \psi[k])
=
\begin{vmatrix}
\bm\Omega(\bm\theta, \bm\rho^T) & \bm\Omega(\bm\theta, \psi)\\
\bm\Omega(\phi, \bm\rho^T) & \omega(\phi, \psi)\\
\end{vmatrix}
\left|\bm\Omega(\bm\theta, \bm\rho^T)\right|^{-1}, \label{omega1}\\
&\omega(\b\phi[k], \b\psi[k])
=
\begin{vmatrix}
\bm\Omega(\bm{\b\theta}, \bm{\b\rho}^T) & \bm\Omega(\bm{\b\theta}, \b\psi)\\
\bm\Omega(\b\phi, \bm{\b\rho}^T) & \omega(\b\phi, \b\psi)\\
\end{vmatrix}
\left|\bm\Omega(\bm{\b\theta}, \bm{\b\rho}^T)\right|^{-1},\label{omega2}\\
&\omega(\b{\b\phi}[k], \b{\b\psi}[k])
=
\begin{vmatrix}
\bm\Omega(\bm{\b{\b\theta}}, \bm{\b{\b\rho}}^T) & \bm\Omega(\bm{\b{\b\theta}}, \b{\b\psi})\\
\bm\Omega(\b{\b\phi}, \bm{\b{\b\rho}}^T) & \omega(\b{\b\phi}, \b{\b\psi})\\
\end{vmatrix}
\left|\bm\Omega(\bm{\b{\b\theta}}, \bm{\b{\b\rho}}^T)\right|^{-1}. \label{omega3}
\end{align}
\end{subequations}
The proof of \eqref{omega-formula} is as follows.  By the definition of $\omega$ given in \eqref{lmBSQ-omega-1},  and together with  \eqref{phik} and \eqref{psik},  we have
\begin{align*}
\Delta_3\left(\omega(\phi[k], \psi[k])\right)= &~\phi[k]T_{_{n_3}}(\psi[k])\nonumber\\
= &~\Delta_3\left(\omega(\phi, \psi)\right)-T_{_{n_3}}(\bm\Omega(\bm\theta^T, \psi)\bm\Omega(\bm\theta^T, \bm\rho)^{-1})\Delta_3(\bm\Omega(\phi, \bm\rho))\\
&-\Delta_3\left(\bm\Omega(\bm\theta^T, \psi)\right)\bm\Omega(\bm\theta^T, \bm\rho)^{-1}\bm\Omega(\phi, \bm\rho)\\
&+T_{_{n_3}}(\bm\Omega(\bm\theta^T, \psi)\bm\Omega(\bm\theta^T, \bm\rho)^{-1})\Delta_3(\bm\Omega(\bm\theta^T, \bm\rho))\bm\Omega(\bm\theta^T, \bm\rho)^{-1}\bm\Omega(\phi, \bm\rho)\\
=&~\Delta_3\left(\omega(\phi, \psi)-\bm\Omega(\bm\theta^T, \psi)\bm\Omega(\bm\theta^T, \bm\rho)^{-1}\bm\Omega(\phi, \bm\rho)\right)\\
=&~\Delta_3\left(\begin{vmatrix}
\bm\Omega(\bm\theta, \bm\rho^T) & \bm\Omega(\bm\theta, \psi)\\
\bm\Omega(\phi, \bm\rho^T) & \omega(\phi, \psi)\\
\end{vmatrix}
\left|\bm\Omega(\bm\theta, \bm\rho^T)\right|^{-1}\right)
\end{align*}
which means \eqref{omega1} is right. Similarly, we can get \eqref{omega2}  and \eqref{omega3} are right as well.

Note here that we use the difference operator property for matrices as follows
\begin{align*}
\Delta_n\left(CA^{-1}B\right)=T_n(CA^{-1})\Delta_n(B)-\Delta_n(C)A^{-1}B-C_nA_n^{-1}\Delta_n(A)A^{-1}B,
\end{align*}
where $\Delta_n=T_n-1$  is the difference operator, $A=A_{N\times N}$,   $B=B_{N\times 1}$ and $C=C_{1\times N}$ are arbitrary function matrix, column vector and row vector of independent discrete variable $n$ respectively.

Next, let us prove the $(k+1)$-th step is also right. By the iterated formulae \eqref{binary-N-DT1},  we have
\begin{align}
&\phi[k+1]=\phi[k]-\theta[k]\omega(\theta[k], \rho[k])^{-1}\omega(\phi[k], \rho[k]).  \label{phik1}
\end{align}
Substitute \eqref{thetan}, \eqref{phik} and \eqref{omega1} into \eqref{phik1}, we have
\begin{align*}
\phi[k+1]&=
\begin{vmatrix}
\bm\Omega(\bm\theta, \bm\rho^T) & \bm\theta\\
\bm\Omega(\phi, \bm\rho^T) & \phi\\
\end{vmatrix}
\cdot\left|\bm\Omega(\bm\theta, \bm\rho^T)\right|^{-1}
-
\begin{vmatrix}
\bm\Omega(\bm\theta, \bm\rho^T) & \bm\theta\\
\bm\Omega(\theta^{k+1}, \bm\rho^T) & \theta^{k+1}\\
\end{vmatrix}
\cdot\left|\bm\Omega(\bm\theta, \bm\rho^T)\right|^{-1}
\cdot\\
&
\begin{vmatrix}
\bm\Omega(\bm\theta, \bm\rho^T) & \bm\Omega(\bm\theta, \rho^{k+1}) \\
\bm\Omega(\theta^{k+1}, \bm\rho^T)  & \omega(\theta^{k+1}, \rho^{k+1}) \\
\end{vmatrix}^{-1}
\cdot\left|\bm\Omega(\bm\theta, \bm\rho^T)\right|
\cdot
\begin{vmatrix}
\bm\Omega(\bm\theta, \bm\rho^T) & \bm\Omega(\bm\theta, \rho^{k+1}) \\
\bm\Omega(\phi, \bm\rho^T)  & \omega(\phi, \rho^{k+1}) \\
\end{vmatrix}
\cdot\left|\bm\Omega(\bm\theta, \bm\rho^T)\right|^{-1}\\
&=\frac{\begin{vmatrix}
\bm\Omega(\bm\theta, \bm\rho^T) & \bm\theta\\
\bm\Omega(\phi, \bm\rho^T) & \phi\\
\end{vmatrix}
\cdot
\begin{vmatrix}
\bm\Omega(\bm\theta, \bm\rho^T) & \bm\Omega(\bm\theta, \rho^{k+1}) \\
\bm\Omega(\theta^{k+1}, \bm\rho^T)  & \omega(\theta^{k+1}, \rho^{k+1}) \\
\end{vmatrix}
-
\begin{vmatrix}
\bm\Omega(\bm\theta, \bm\rho^T) & \bm\theta\\
\bm\Omega(\theta^{k+1}, \bm\rho^T) & \theta^{k+1}\\
\end{vmatrix}
\cdot
\begin{vmatrix}
\bm\Omega(\bm\theta, \bm\rho^T) & \bm\Omega(\bm\theta, \rho^{k+1}) \\
\bm\Omega(\phi, \bm\rho^T)  & \omega(\phi, \rho^{k+1}) \\
\end{vmatrix}
}{\begin{vmatrix}
\bm\Omega(\bm\theta, \bm\rho^T) & \bm\Omega(\bm\theta, \rho^{k+1}) \\
\bm\Omega(\theta^{k+1}, \bm\rho^T)  & \omega(\theta^{k+1}, \rho^{k+1}) \\
\end{vmatrix}\cdot\left|\bm\Omega(\bm\theta, \bm\rho^T)\right|}\\
&=
\begin{vmatrix}
\bm\Omega(\bm\theta, \bm\rho^T) & \bm\Omega(\bm\theta, \rho^{k+1}) & \bm\theta\\
\bm\Omega(\theta^{k+1}, \bm\rho^T) & \omega(\theta^{k+1}, \rho^{k+1}) & \theta^{k+1}\\
\bm\Omega(\phi, \bm\rho^T) & \omega(\phi, \rho^{k+1}) & \phi\\
\end{vmatrix}
\cdot
\begin{vmatrix}
\bm\Omega(\bm\theta, \bm\rho^T) & \bm\Omega(\bm\theta, \rho^{k+1}) \\
\bm\Omega(\theta^{k+1}, \bm\rho^T)  & \omega(\theta^{k+1}, \rho^{k+1}) \\
\end{vmatrix}^{-1}.
\end{align*}
Here we use the Jacobi identity.  So the equation \eqref{binary-N-DT1} is right.

Similarly, we can have
\begin{align*}
\psi[k+1]&=
\begin{vmatrix}
\bm\Omega(\bm\theta^T, \bm\rho) & \bm\Omega(\theta^{k+1}, \bm\rho) & \bm\rho\\
\bm\Omega(\bm\theta^T, \rho^{k+1}) & \omega(\theta^{k+1}, \rho^{k+1}) & \rho^{k+1}\\
\bm\Omega(\bm\theta^T, \psi) & \omega(\theta^{k+1}, \psi) & \psi\\
\end{vmatrix}
\cdot
\begin{vmatrix}
\bm\Omega(\bm\theta^T, \bm\rho) & \bm\Omega(\theta^{k+1}, \bm\rho) \\
\bm\Omega(\bm\theta^T, \rho^{k+1})  & \omega(\theta^{k+1}, \rho^{k+1}) \\
\end{vmatrix}^{-1}.
\end{align*}
It means \eqref{binary-N-DT4} is right.

Furthermore, the $\b{\cdot}$ cases \eqref{binary-N-DT2}, \eqref{binary-N-DT3} can be obtained by taking the shift operators $\lambda^{-1}T_{n_3}$ and $\lambda^{-2}T_{n_3}^2$ to act on \eqref{binary-N-DT1} , respectively. The $\b{\b{\cdot}}$  cases in \eqref{binary-N-DT5}, \eqref{binary-N-DT6}  can be obtained by taking the shift operators $\lambda^{-1}T_{n_3}^{-1}$ and $\lambda^{-2}T_{n_3}^{-2}$ to act on \eqref{binary-N-DT4} , respectively.

From \eqref{potentials}, for the potentials $\left(\h v,\h w\right)$, we get
\begin{align*}
v[k+1]=&~\frac{T_{n_3}^2\left(\omega(\theta[k], \rho[k])\right)}{\omega(\theta[k], \rho[k])}v[k]\\
=&~T_{n_3}^2\left(
\begin{vmatrix}
\bm\Omega(\bm\theta, \bm\rho^T) & \bm\Omega(\bm\theta, \rho^{k+1})\\
\bm\Omega(\theta^{k+1}, \bm\rho^T) & \omega(\theta^{k+1}, \rho^{k+1})\\
\end{vmatrix}
\cdot
\left|\bm\Omega(\bm\theta, \bm\rho^T)\right|^{-1}
\right)
\cdot
\begin{vmatrix}
\bm\Omega(\bm\theta, \bm\rho^T) & \bm\Omega(\bm\theta, \rho^{k+1})\\
\bm\Omega(\theta^{k+1}, \bm\rho^T) & \omega(\theta^{k+1}, \rho^{k+1})\\
\end{vmatrix}^{-1}
\cdot
\left|\bm\Omega(\bm\theta, \bm\rho^T)\right|\\
&
\cdot
T_{n_3}^2\left(\left|
\bm\Omega(\bm\theta, \bm\rho^T)\right|\right)\cdot\left|\bm\Omega(\bm\theta, \bm\rho^T)\right|^{-1}\cdot v \\
=&~
T_{n_3}^2\left(
\begin{vmatrix}
\bm\Omega(\bm\theta, \bm\rho^T) & \bm\Omega(\bm\theta, \rho^{k+1})\\
\bm\Omega(\theta^{k+1}, \bm\rho^T) & \omega(\theta^{k+1}, \rho^{k+1})\\
\end{vmatrix}
\right)\cdot
\begin{vmatrix}
\bm\Omega(\bm\theta, \bm\rho^T) & \bm\Omega(\bm\theta, \rho^{k+1})\\
\bm\Omega(\theta^{k+1}, \bm\rho^T) & \omega(\theta^{k+1}, \rho^{k+1})\\
\end{vmatrix}^{-1}\cdot v
\end{align*}

\begin{align*}
w[k+1]=&~\frac{T_{n_3}\left(\omega(\theta[k], \rho[k])\right)}{\omega(\theta[k], \rho[k])}w[k]\\
=&~T_{n_3}\left(
\begin{vmatrix}
\bm\Omega(\bm\theta, \bm\rho^T) & \bm\Omega(\bm\theta, \rho^{k+1})\\
\bm\Omega(\theta^{k+1}, \bm\rho^T) & \omega(\theta^{k+1}, \rho^{k+1})\\
\end{vmatrix}
\cdot
\left|\bm\Omega(\bm\theta, \bm\rho^T)\right|^{-1}
\right)
\cdot
\begin{vmatrix}
\bm\Omega(\bm\theta, \bm\rho^T) & \bm\Omega(\bm\theta, \rho^{k+1})\\
\bm\Omega(\theta^{k+1}, \bm\rho^T) & \omega(\theta^{k+1}, \rho^{k+1})\\
\end{vmatrix}^{-1}
\cdot
\left|\bm\Omega(\bm\theta, \bm\rho^T)\right|\\
&
\cdot
T_{n_3}\left(\left|
\bm\Omega(\bm\theta, \bm\rho^T)\right|\right)\cdot\left|\bm\Omega(\bm\theta, \bm\rho^T)\right|^{-1}\cdot w \\
=&~
T_{n_3}\left(
\begin{vmatrix}
\bm\Omega(\bm\theta, \bm\rho^T) & \bm\Omega(\bm\theta, \rho^{k+1})\\
\bm\Omega(\theta^{k+1}, \bm\rho^T) & \omega(\theta^{k+1}, \rho^{k+1})\\
\end{vmatrix}
\right)\cdot
\begin{vmatrix}
\bm\Omega(\bm\theta, \bm\rho^T) & \bm\Omega(\bm\theta, \rho^{k+1})\\
\bm\Omega(\theta^{k+1}, \bm\rho^T) & \omega(\theta^{k+1}, \rho^{k+1})\\
\end{vmatrix}^{-1}\cdot w.
\end{align*}
\end{proof}

Through the 3-periodic properties of the eigenfunctions $(\theta^{1}, \b\theta^{1}, \b{\b\theta}^{1})^T, (\theta^{2},\b\theta^{2}, \b{\b\theta}^{2})^T, \dots, (\theta^{N}, \b\theta^{N}, \b{\b\theta}^{N})^T$, $(\phi, \b\phi, \b{\b\phi})$, $(\rho^{1}, \b\rho^{1}, \b{\b\rho}^{1})^T, (\rho^{2},\b\rho^{2}, \b{\b\rho}^{2})^T,  \dots,  (\rho^{N}, \b\rho^{N}, \b{\b\rho}^{N})^T$ and $(\psi, \b\psi, \b{\b\psi})$, we can obtain the 3-periodic relationships for the new eigenfunctions $(\h\phi, \h{\b\phi}, \h{\b{\b\phi}})$ and $(\h\psi, \h{\b\psi}, \h{\b{\b\psi}})$ same as that shown in \eqref{3-periodic-cond-4}. Similarly, for the new potentials $(\h v, \h w)$, we have
\begin{align*}
&\h v\!=\!\frac{|T_{_{n_3}}^2\!\!\!\left(\bm\Omega(\bm\theta, \bm \rho^T\!)\right)\!\!|}{|\bm\Omega(\bm\theta, \bm \rho^T\!)\!|}v
\!=\!T_{n_3}\left(\frac{|\bm\Omega(\bm{\b{\theta}}\!, \bm {\b{\rho}}^T\!)|}{\left|T_{_{n_3}}\!\!\left(\bm\Omega(\bm{\b{\theta}}\!, \bm {\b{\rho}}^T\!)\right)\right|}\right)v
\!=\!\frac{|\bm\Omega(\bm{\b{\b\theta}}\!, \bm {\b{\b\rho}}^T\!)\!|}{\left|T_{_{n_3}}\!\!\!\left(\bm\Omega(\bm{\b{\b\theta}}\!, \bm {\b{\b\rho}}^T\!)\right)\!\right|}v,\\
&\h w\!=\!\frac{|T_{_{n_3}}\!\!\!\left(\bm\Omega(\bm\theta, \bm \rho^T\!)\right)\!\!|}{|\bm\Omega(\bm\theta, \bm \rho^T\!)\!|}w
\!=\!\frac{|\bm\Omega(\bm{\b\theta}\!, \bm {\b\rho}^T\!)\!|}{\left|T_{_{n_3}}^2\!\!\!\left(\bm\Omega(\bm{\b\theta}\!, \bm {\b\rho}^T\!)\right)\!\right|}w
\!=\!T_{n_3}\left(\frac{\left|T_{_{n_3}}\!\!\left(\bm\Omega(\bm{\b{\b\theta}}\!, \bm {\b{\b\rho}}^T\!)\right)\right|}{|\bm\Omega(\bm{\b{\b\theta}}\!, \bm {\b{\b\rho}}^T\!)|}\right)w.
\end{align*}


\section{Explicit solutions obtained by Darboux transformations and binary Darboux transformations}
Here we present explicit examples of the classes of solutions that may be obtained by means of the Darboux and binary transformations derived above.
We choose the seed solution of the discrete modified Boussinesq equation \eqref{d-p-MBSQ} as $v=v_0=1, w=w_0=1$.  With this choice, the first linear system \eqref{matrix-LP} reads
\begin{align*}
\phi_1=a_1\phi+\lambda\b\phi,\\
\b\phi_1=a_1\b\phi+\lambda\b{\b\phi},\\
\b{\b\phi}_1=a_1\b{\b\phi}+\lambda{\phi},
\end{align*}
\begin{align*}
\phi_2=a_2\phi+\lambda\b\phi,\\
\b\phi_2=a_2\b\phi+\lambda\b{\b\phi},\\
\b{\b\phi}_2=a_2\b{\b\phi}+\lambda{\phi},
\end{align*}
and the eigenfunctions are found to be, with $\omega=\frac{-1\!+\!\sqrt{3}i}{2}$, $\omega^*=\frac{-1\!-\!\sqrt{3}i}{2}$ are the solutions of $\omega^3=1$ for $\omega\neq1$,
\begin{subequations}\label{lmBSQ-seed-linear-solu-1}
\begin{align}
&\phi(n_1, n_2, n_3; \lambda)\!=\!\phi^{(1)}(n_1, n_2, n_3; \lambda)\!+\!\phi^{(2)}(n_1, n_2, n_3; \lambda)\!+\!\phi^{(3)}(n_1, n_2, n_3; \lambda),\\
&\b\phi(n_1, n_2, n_3; \lambda)\!=\!\phi^{(1)}(n_1, n_2, n_3; \lambda)\!+\!\omega\phi^{(2)}(n_1, n_2, n_3; \lambda)\!+\!\omega^*\phi^{(3)}(n_1, n_2, n_3; \lambda),\\
&\b{\b\phi}(n_1, n_2, n_3; \lambda)\!=\!\phi^{(1)}(n_1, n_2, n_3; \lambda)\!+\!\omega^{2}\!\phi^{(2)}(n_1, n_2, n_3; \lambda)\!+\!\omega^{*^{2}}\!\phi^{(3)}(n_1, n_2, n_3; \lambda),
\end{align}
\end{subequations}
where
\begin{align*}
&\phi^{(1)}(n_1, n_2, n_3; \lambda)=\lambda^{n_3}\!\!\prod_{j=1}^{2}(a_j\!+\!\lambda)^{n_j},\\
&\phi^{(2)}(n_1, n_2, n_3; \lambda)=\left(\omega\lambda\right)^{n_3}\!\!\prod_{j=1}^{2}(a_j\!+\!\omega\lambda)^{n_j},\\
&\phi^{(3)}(n_1, n_2, n_3; \lambda)=\left(\omega^*\lambda\right)^{n_3}\!\!\prod_{j=1}^{2}(a_j\!+\!\omega^*\lambda)^{n_j},
\end{align*}
which hold the 3-periodic conditions $\phi=\lambda^{-3}T_{_{n_3}}^3(\phi)$, $\b\phi=\lambda^{-1}T_{_{n_3}}(\phi)$, $\b{\b\phi}=\lambda^{-2}T_{_{n_3}}^2(\phi)$.

In a similar way the eigenfunctions of the second linear system \eqref{matrix-LP-2} are
 \begin{subequations}\label{lmBSQ-seed-linear-solu-2}
\begin{align}
&\psi(n_1, n_2, n_3; \lambda)\!=\!\psi^{(1)}(n_1, n_2, n_3; \lambda)\!+\!\psi^{(2)}(n_1, n_2, n_3; \lambda)\!+\!\psi^{(3)}(n_1, n_2, n_3; \lambda),\\
&\b\psi(n_1, n_2, n_3; \lambda)\!=\!\psi^{(1)}(n_1, n_2, n_3; \lambda)\!+\!\omega\psi^{(2)}(n_1, n_2, n_3; \lambda)\!+\!\omega^*\psi^{(3)}(n_1, n_2, n_3; \lambda),\\
&\b{\b\psi}(n_1, n_2, n_3; \lambda)\!=\!\psi^{(1)}(n_1, n_2, n_3; \lambda)\!+\!\omega^{2}\psi^{(2)}(n_1, n_2, n_3; \lambda)\!+\!\omega^{*^{2}}\psi^{(3)}(n_1, n_2, n_3; \lambda),
\end{align}
\end{subequations}
where
\begin{subequations}
\begin{align*}
&\psi^{(1)}(n_1, n_2, n_3; \lambda)=\lambda^{-n_3}\!\!\prod_{j=1}^{2}(a_j\!+\!\lambda)^{-n_j},\\
&\psi^{(2)}(n_1, n_2, n_3; \lambda)=\left(\omega\lambda\right)^{-n_3}\!\!\prod_{j=1}^{2}(a_j\!+\!\omega\lambda)^{-n_j},\\
&\psi^{(3)}(n_1, n_2, n_3; \lambda)=\left(\omega^*\lambda\right)^{-n_3}\!\!\prod_{j=1}^{2}(a_j\!+\!\omega^*\lambda)^{-n_j},
\end{align*}
\end{subequations}
which hold the 3-periodic conditions $\psi=\lambda^{-3}T_{_{n_3}}^{-3}(\psi)$, $\b\psi=\lambda^{-1}T_{_{n_3}}^{-1}(\psi)$, $\b{\b\psi}=\lambda^{-2}T_{_{n_3}}^{-2}(\psi)$.

From these eigenfunctions \eqref{lmBSQ-seed-linear-solu-1} and \eqref{lmBSQ-seed-linear-solu-2} we may integrate \eqref{lmBSQ-omega} and obtain the potentials $(\omega, \b\omega,\b{\b\omega})$
\begin{subequations}\label{lmBSQ-seed-linear-solu-omeg}
\begin{align}
\omega(\phi, \psi)&=\lambda^{-1}\left[\left(\omega-1\right)\left(\omega^{(1)}+\omega^{(4)}+\omega^{(5)}\right)+\left(\omega^*-1\right)\left(\omega^{(2)}+\omega^{(3)}+\omega^{(6)}\right)\right],\\
\omega(\b\phi,\b\psi)&=\lambda^{-1}\left[\omega\left(\omega-1\right)\left(\omega^{(1)}+\omega^{(4)}+\omega^{(5)}\right)+\omega^*\left(\omega^*-1\right)\left(\omega^{(2)}+\omega^{(3)}+\omega^{(6)}\right)\right],\\
\omega(\b{\b\phi}, \b{\b\psi})&=\lambda^{-1}\left[\omega^*(\omega-1)\left(\omega^{(1)}+\omega^{(4)}+\omega^{(5)}\right)+\omega(\omega^*-1)\left(\omega^{(2)}+\omega^{(3)}+\omega^{(6)}\right)\right],
\end{align}
\end{subequations}
where
\begin{align*}
&\omega^{(1)}(n_1,n_2,n_3;\lambda)=\omega^{n_3}\!\!\prod_{j=1}^{2}\left(\frac{a_j\!+\!\omega\lambda}{a_j\!+\!\lambda}\right)^{n_j},&
~\omega^{(2)}(n_1,n_2,n_3;\lambda)=\omega^{*^{n_3}}\!\!\prod_{j=1}^{2}\left(\frac{a_j\!+\!\omega^*\lambda}{a_j\!+\!\lambda}\right)^{n_j},\\
&\omega^{(3)}(n_1,n_2,n_3;\lambda)=\omega^{*^{n_3+1}}\!\!\prod_{j=1}^{2}\left(\frac{a_j\!+\!\lambda}{a_j\!+\!\omega\lambda}\right)^{n_j},&
~\omega^{(4)}(n_1,n_2,n_3;\lambda)=\omega^{n_3+1}\!\!\prod_{j=1}^{2}\left(\frac{a_j\!+\!\lambda}{a_j\!+\!\omega^*\lambda}\right)^{n_j},\\
&\omega^{(5)}(n_1,n_2,n_3;\lambda)=\omega^{n_3+2}\!\!\prod_{j=1}^{2}\left(\frac{a_j\!+\!\omega^*\lambda}{a_j\!+\!\omega\lambda}\right)^{n_j},&
~\omega^{(6)}(n_1,n_2,n_3;\lambda)=\omega^{*^{n_3+2}}\!\!\prod_{j=1}^{2}\left(\frac{a_j\!+\!\omega\lambda}{a_j\!+\!\omega^*\lambda}\right)^{n_j}.
\end{align*}
They hold the 3-periodic conditions $\omega(\phi, \psi)=T_{_{n_3}}^3(\omega(\phi, \psi))$, $\omega(\b\phi, \b\psi)=T_{_{n_3}}(\omega(\phi, \psi))$, $\omega(\b{\b\phi}, \b{\b\psi})=T_{_{n_3}}^2(\omega(\phi, \psi))$.

Moreover,  for $\lambda=\lambda_k$, $(v,w)=(v_0,w_0)=(1,1)$, the first linear system \eqref{matrix-LP} has eigenfunctions
\begin{subequations}\label{lmBSQ-seed-linear-solu-theta}
\begin{align}
\theta^k(n_1, n_2, n_3; \lambda_k)=\phi(n_1, n_2, n_3; \lambda)|_{\lambda=\lambda_k},\label{lmBSQ-seed-linear-solu-theta-1}\\
\b\theta^k(n_1, n_2, n_3; \lambda_k)=\b\phi(n_1, n_2, n_3; \lambda)|_{\lambda=\lambda_k},\\
\b{\b\theta}^k(n_1, n_2, n_3; \lambda_k)=\b{\b\phi}(n_1, n_2, n_3; \lambda)|_{\lambda=\lambda_k},
\end{align}
\end{subequations}
which hold $\theta^k=\lambda_k^{-3}T_{_{n_3}}^3(\theta^k)$, $\b\theta^k=\lambda_k^{-1}T_{_{n_3}}(\theta^k)$, $\b{\b\theta}^k=\lambda_k^{-2}T_{_{n_3}}^2(\theta^k)$.

Similarly, for $\lambda=\lambda_l$, $(v,w)=(v_0,w_0)=(1,1)$,  the second linear system \eqref{matrix-LP-2} has eigenfunctions
\begin{subequations}\label{lmBSQ-seed-linear-solu-rho}
\begin{gather}
\rho^l(n_1, n_2, n_3; \lambda_l)=\psi(n_1, n_2, n_3; \lambda)|_{\lambda=\lambda_l},\label{lmBSQ-seed-linear-solu-rho-1}\\
\b\rho^l(n_1, n_2, n_3; \lambda_l)=\b\psi(n_1, n_2, n_3; \lambda)|_{\lambda=\lambda_l},\\
\b{\b\rho}^l(n_1, n_2, n_3; \lambda_l)=\b{\b\psi}(n_1, n_2, n_3; \lambda)|_{\lambda=\lambda_l},
\end{gather}
\end{subequations}
which hold  $\rho^l=\lambda_l^{-3}T_{_{n_3}}^{-3}(\rho^l)$, $\b\rho^l=\lambda_l^{-1}T_{_{n_3}}^{-1}(\rho^l)$, $\b{\b\rho}^l=\lambda_l^{-2}T_{_{n_3}}^{-2}(\rho^l)$.

From these eigenfunctions \eqref{lmBSQ-seed-linear-solu-theta} and \eqref{lmBSQ-seed-linear-solu-rho} we may integrate \eqref{lmBSQ-omega} and obtain the potentials, for $\lambda_k\neq\lambda_l$,
\begin{subequations}
\begin{align}
\omega(\theta^k, \rho^l)\!=\!\lambda_l^{\!\!-1}\!\!\left(\frac{\lambda_k}{\lambda_l}\right)^{n_3}\!\!\bigg[\!\!
&\left(\frac{\lambda_k}{\lambda_l}\!-\!1\right)^{\!-1}\!\!\left(\omega^{(1)}\!+\!\omega^{(2)}+\!\omega^{(3)}\!\right)\!+\!\left(\frac{\lambda_k}{\lambda_l}*\omega\!-\!1\right)^{-1}\!\!\!\left(\omega^{(4)}\!
+\!\omega^{(7)}\!+\!\omega^{(8)}\right)\nonumber\\
&+\!\left(\frac{\lambda_k}{\lambda_l}*\omega^*-1\right)^{-1}\left(\omega^{(5)}\!+\!\omega^{(6)}\!+\!\omega^{(9)}\right)\bigg],
\label{lmBSQ-seed-linear-solu-omeg-theta-rho-1}\\
\omega(\b\theta^k, \b\rho^l)\!=\!\lambda_l^{\!\!-1}\!\!\left(\frac{\lambda_k}{\lambda_l}\right)^{n_3}\!\!\bigg[\!\!
&\left(\frac{\lambda_k}{\lambda_l}\!-\!1\right)^{\!-1}\!\!\left(\omega^{(1)}\!+\!\omega^{(2)}+\!\omega^{(3)}\!\right)\!+\!\omega*\left(\frac{\lambda_k}{\lambda_l}*\omega\!-\!1\right)^{-1}\!\!\!\left(\omega^{(4)}\!+\!\omega^{(7)}\!+\!\omega^{(8)}\right)\nonumber\\
&+\!\omega^**\left(\frac{\lambda_k}{\lambda_l}*\omega^*-1\right)^{-1}\left(\omega^{(5)}\!+\!\omega^{(6)}\!+\!\omega^{(9)}\right)\bigg],\\
\omega(\b{\b\theta}^k, \b{\b\rho}^l)\!=\!\lambda_l^{\!\!-1}\!\!\left(\frac{\lambda_k}{\lambda_l}\right)^{n_3}\!\!\bigg[\!\!
&\left(\frac{\lambda_k}{\lambda_l}\!-\!1\right)^{\!-1}\!\!\left(\omega^{(1)}\!+\!\omega^{(2)}+\!\omega^{(3)}\!\right)\!+\!\omega^**\left(\frac{\lambda_k}{\lambda_l}*\omega\!-\!1\right)^{-1}\!\!\!\left(\omega^{(4)}\!
+\!\omega^{(7)}\!+\!\omega^{(8)}\right)\nonumber\\
&+\!\omega*\left(\frac{\lambda_k}{\lambda_l}*\omega^*-1\right)^{-1}\left(\omega^{(5)}\!+\!\omega^{(6)}\!+\!\omega^{(9)}\right)\bigg],
\end{align}
\end{subequations}
where
\begin{align*}
&\omega^{(1)}(n_1,n_2,n_3;\lambda_k,\lambda_l)=\prod_{j=1}^{2}\!\!\left(\frac{a_j+\lambda_k}{a_j+\lambda_l}\right)^{n_j},
~\omega^{(2)}(n_1,n_2,n_3;\lambda_k,\lambda_l)=\omega^*\prod_{j=1}^{2}\left(\frac{a_j\!+\!\omega\lambda_k}{a_j\!+\!\omega^*\lambda_l}\right)^{n_j},\\
&\omega^{(3)}(n_1,n_2,n_3;\lambda_k,\lambda_l)=\omega\prod_{j=1}^{2}\left(\frac{a_j\!+\!\omega^*\lambda_k}{a_j\!+\!\omega\lambda_l}\right)^{n_j},
~\omega^{(4)}(n_1,n_2,n_3;\lambda_k,\lambda_l)=\omega^{n_3}\!\!\prod_{j=1}^{2}\left(\frac{a_j\!+\!\omega\lambda_k}{a_j\!+\!\lambda_l}\right)^{n_j},\\
&\omega^{(5)}(n_1,n_2,n_3;\lambda_k,\lambda_l)=\omega^{*^{n_3}}\!\!\prod_{j=1}^{2}\left(\frac{a_j\!+\!\omega^*\lambda_k}{a_j\!+\!\lambda_l}\right)^{n_j},
~\omega^{(6)}(n_1,n_2,n_3;\lambda_k,\lambda_l)=\omega^{*^{n_3+1}}\!\!\prod_{j=1}^{2}\left(\frac{a_j\!+\!\lambda_k}{a_j\!+\!\omega\lambda_l}\right)^{n_j},\\
&\omega^{(7)}(n_1,n_2,n_3;\lambda_k,\lambda_l)=\omega^{n_3+1}\!\!\prod_{j=1}^{2}\left(\frac{a_j\!+\!\lambda_k}{a_j\!+\!\omega^*\lambda_l}\right)^{n_j},
~\omega^{(8)}(n_1,n_2,n_3;\lambda_k,\lambda_l)=\omega^{n_3+2}\!\!\prod_{j=1}^{2}\left(\frac{a_j\!+\!\omega^*\lambda_k}{a_j\!+\!\omega\lambda_l}\right)^{n_j},\\
&\omega^{(9)}(n_1,n_2,n_3;\lambda_k,\lambda_l)=\omega^{*^{n_3+2}}\!\!\prod_{j=1}^{2}\left(\frac{a_j\!+\!\omega\lambda_k}{a_j\!+\!\omega^*\lambda_l}\right)^{n_j},
\end{align*}
which hold $T_{_{n_3}}(\omega(\theta^k, \rho^l))=\left(\frac{\lambda_k}{\lambda_l}\right)\omega(\b\theta^k, \b\rho^l),~~T_{_{n_3}}^2(\omega(\theta^k, \rho^l))=\left(\frac{\lambda_k}{\lambda_l}\right)^2\omega(\b{\b\theta}^k, \b{\b\rho}^l),~~T_{_{n_3}}^3(\omega(\theta^k, \rho^l))=\left(\frac{\lambda_k}{\lambda_l}\right)^3\omega(\theta^k, \rho^l)$. For $\lambda_k=\lambda_l$,  these eigenfunctions are \eqref{lmBSQ-seed-linear-solu-omeg} taking $\lambda=\lambda_k=\lambda_l$.

Given the above expressions it is straightforward to write down the following explicit solutions for the discrete modified Boussinesq equation \eqref{d-p-MBSQ} as follows
\begin{align}\label{lmBSQ-n-solu-1}
v(n_1, n_2, n_3)\!\!=\!\!\frac{T_{_{n_3}}^2\!\!\left(C_{_{[3]}}(\theta^{1},\theta^{2}, \dots, \theta^{N})\right)}{C_{_{[3]}}(\theta^{1},\theta^{2}, \dots, \theta^{N})}v_0,\\
w(n_1, n_2, n_3)\!\!=\!\!\frac{T_{_{n_3}}\!\!\left(C_{_{[3]}}(\theta^{1},\theta^{2}, \dots, \theta^{N})\right)}{C_{_{[3]}}(\theta^{1},\theta^{2}, \dots, \theta^{N})}w_0,
\end{align}
where $\theta^k=\theta^k(n_1, n_2, n_3;\lambda_k)$ is given by \eqref{lmBSQ-seed-linear-solu-theta-1}, and $\lambda_k$ are arbitrary constants;
\begin{align}\label{lmBSQ-n-solu-2}
&v(n_1, n_2, n_3)\!\!=\!\!\frac{T_{_{n_3}}^{-1}\!\!\left(C_{_{[\b 3]}}(\rho^{1},\rho^{2}, \dots, \rho^{N})\right)}{C_{_{[\b 3]}}(\rho^{1},\rho^{2}, \dots, \rho^{N})}v_0,\\
&w(n_1, n_2, n_3)\!\!=\!\!\frac{T_{_{n_3}}^{-2}\!\!\left(C_{_{[\b 3]}}(\rho^{1},\rho^{2}, \dots, \rho^{N})\right)}{C_{_{[\b 3]}}(\rho^{1},\rho^{2}, \dots, \rho^{N})}w_0,
\end{align}
where $\rho^k=\rho^k(n_1, n_2, n_3;\lambda_k)$ is given by \eqref{lmBSQ-seed-linear-solu-rho-1} and $\lambda_k$ are arbitrary constants;
\begin{align}\label{n-solu-2}
&v(n_1, n_2, n_3)=\frac{T_{_{n_3}}^2\!\left(\mathrm {det}(\omega_{k,l})\right)}{\mathrm {det}(\omega_{k,l})}v_0, ~~~~~~(k, l= 1, 2, \dots, N)\\
&w(n_1, n_2, n_3)=\frac{T_{_{n_3}}\!\left(\mathrm {det}(\omega_{k,l})\right)}{\mathrm {det}(\omega_{k,l})}w_0,~~~~~~(k, l= 1, 2, \dots, N)
\end{align}
where $\omega_{k,l}$ is given by \eqref{lmBSQ-seed-linear-solu-omeg-theta-rho-1} with  $\omega_{k,l}=\omega(\theta^k, \rho^l)$.

\section{Conclusions}
In this paper we have tried to discuss how apply the Darboux and binary Darboux transformation to the multi-component discrete integrable equations. The key to our success is a 3-periodic reduction technique on the Hirota-Miwa equation and its Lax pair. It is different from the one  in \cite{FN-2017} on the symmetry constraints, namely in \cite{FN-2017} the bilinear discrete  modified Boussinesq equation is in two-component form and ours is in three-component form. This reduction technique helps us to obtain the well-known discrete Boussinesq equation in two-component form as well as its Lax pair in $3\times3$ matrix form (corresponding to the results in \cite{FX-2017}). Moreover, the 3-periodic reduction conditions are preserved well within the Darboux and binary Darboux transformations, namely the new potentials and eigenfunctions generated by the Darboux and binary Darboux transformation still have 3-periodic property. In the process of constructing the binary Darboux transformation, we find out that there are two different linear systems (without discrete adjoint relationship introduced in \cite{N-1997}) employed in the binary Darboux transformation for multi-dimensionally consistent discrete integrable systems, which is unlike the continuous case.

Through the $N-$fold iteration of the Darboux transformations and of the binary Darboux transformations, respectively, for the two-component discrete modified Boussinesq equation, we obtained the Casorati-type solutions (expressed in eigenfunctions of the two different linear systems respectively) and Gramm-type solutions (expressed in both eigenfunctions of the two different linear systems).The  $N$-soliton solutions are found as part of those solutions for which the vacuum potentials as constants, e.g., (1,1).

Our work in the current paper is a meaningful progress based on the work in \cite{SNZ-2014, SNZ-2017}, namely the extension of our work is possible to be used to  investigate the Darboux and binary Darboux transformation for the lattice Gel'fand-Dikii hierarchy ($N$-component discrete integrable equations) as well as its corresponding modified version (namely lattice modified Gel'fand-Dikii hierarchy) \cite{N-1997}. We have succeeded in several examples, e.g., the lattice KdV-type equations \cite{SNZ-2014, SNZ-2017} and the lattice Boussinesq-type equations in the current paper. The Darboux transformation and the binary Darboux transformations are also applicable to the unmodified equation which is Miura related to the modified equation. For example, the unmodified discrete Boussinesq equation is a three-component system composed of potentials $(x, y, z)$ as follows (see \cite{FX-2017} the expression (3.19) for $i=0, 1, 2$ and $(k_1,l_1;k_2,l_2)=(0,1;0,1)$; see also in \cite{W-2018} the expressions (4.10a) and (4.11a) for $r=3$)
\begin{align}\label{3-dBSQ}
\frac{x_2-x_{12}+y_1-y}{x_1-x_{12}+y_2-y}=\frac{y_2-y_{12}+z_1-z}{y_1-y_{12}+z_2-z}=\frac{z_2-z_{12}+x_1-x}{z_1-z_{12}+x_2-x},
\end{align}
and its Lax pair
\begin{subequations}\label{matrix-LP-dBSQ}
\begin{align}
&\bm\Phi_1=\bm L\bm\Phi,\label{matrix-LP-dBSQ-1}\\
&\bm\Phi_2=\bm M\bm\Phi, \label{matrix-LP-dBSQ-2}
\end{align}
\end{subequations}
where
\begin{gather}
\bm L=\left(
\begin{array}{ccc}
 p+y-x_1& \lambda & 0 \\
0 & p+z-y_1 & \lambda \\
\lambda & 0 & p+x-z_1 \\
\end{array}
\right)
,~~~~
\bm M=\left(
\begin{array}{ccc}
 q+y-x_2& \lambda & 0 \\
0 & q+z-y_2 & \lambda \\
\lambda & 0 & q+x-z_2 \\
\end{array}
\right)
.
\end{gather}
The Miura transformation between the discrete Boussinesq equation \eqref{3-dBSQ} and the discrete modified Boussinesq equation \eqref{d-p-MBSQ} is given by
\begin{subequations}
\begin{align}
&p+y-x_1=pw_1w^{-1},\\
&p+z-y_1=pv_1v^{-1}ww_1^{-1},\\
&p+x-z_1=pvv_1^{-1},
\end{align}
\end{subequations}
and
\begin{subequations}
\begin{align}
&q+y-x_2=qw_2w^{-1},\\
&q+z-y_2=qv_2v^{-1}ww_2^{-1},\\
&q+x-z_2=qvv_2^{-1}.
\end{align}
\end{subequations}
Hence, from the Miura transformation, we can immediately build the Darboux and binary Darboux transformations of the eigenfunctions. For the potentials we need to do some adjustment based on some algebraic calculation. The results in the paper can also be generalized to the lattice Gel'fand-Dikii hierarchy for arbitrary $N$.
\section*{Acknowledgements}
The authors (YS and JXZ) dedicate this paper to Jonathan J C Nimmo. His original idea on this topic has had an enduring influence on YS's research, both directly and indirectly. YS would like to thank Jarmo Hietarinta for his useful suggestions and hospitality when the author visited Turku University. YS is supported by the National Natural Science Foundation of China (NSFC) grant (Grant Number 11501510).


\begin{thebibliography}{99}
\bibitem{ABS-2003} Adler V E, Bobenko A I and Suris Yu B 2003 Classification of integrable equations on quad-graphs. The consistency approach, \emph{Commun. Math. Phys.} \textbf{233} 513-543
\bibitem{DS-1997} Doliwa A and Santini P M 1997 Multidimensional quadrilateral lattices are integrable \emph{Phys. Lett. A}\textbf{ 233} 365-72
\bibitem{FX-2017}Fordy A P, Xenitidis P 2017 $\mathbb{Z}_N$ graded discrete Lax pairs and integrable difference equations \emph{J. Phys. A: Math. Theor.} \textbf{50} 165205.
\bibitem{FN-2017} Fu W and Nijhoff F W 2017 On reduction of the discrete Kadomtsev-Petviashvili-type equations \emph{J. Phys. A: Math. Theor.} \textbf{50} 505203
\bibitem{W-2018} Fu W 2018 Direct linearisation of the discrete-time two-dimensional Toda lattices \emph{arxiv:1802.06452}
\bibitem{HJN-2016} Hietarinta J, Joshi N and Nijhoff F W 2016 \emph{Discrete Systems and Integrability}, Cambridge: Cambridge University Press
\bibitem{H-2011} Hietarinta J 2011 Boussinesq-like multi-component lattice equations and multi-dimensional consistency  \emph{J. Phys. A: Math. Theor.} \textbf{44} 165204
\bibitem{HZ-2010} Hietarinta J and Zhang D J 2010  Multisoliton solutions to the lattice Boussinesq equation \emph{J. Math. Phys.} \textbf{51} 033505
\bibitem{H-1981} Hirota R 1981 Discrete analogue of a generalized Toda equation \emph{J. Phys. Soc. Jpn.} \textbf{50} 3785-91
\bibitem{MK-2010} Maruno K and Kajiwara K 2010 The discrete potential Boussinesq equation and its multisoliton solutions \emph{Appl. Anal.} \textbf{89} 593-609
\bibitem{M-1982} Miwa T 1982 On Hirota's difference equations \emph{Proc. Japan Acad.} \textbf{58} Ser. A  9-12
\bibitem{M-1991} Matveev V B and Salle M A 1991 \emph{Darboux transformations and solitons} Berlin: Springer
\bibitem{NW-2001} Nijhoff F W and Walker A J 2001 The discrete and continuous Painlev\'e VI hierarchy and the Garnier systems, \emph{Glasgow Math. J.} \textbf{43} A 109-123
\bibitem{NP-1992} Nijhoff F W, Papageorgiou V G, Capel H W and Quispel G R W 1992 The lattice Gel'fand-Dikii hierarchy, \emph{Inverse Problems} \textbf{8} 597-651
\bibitem{NS-1997} Nimmo J J C and Schief W K 1997 Superposition principles associated with the Moutard transformation: An integrable discretization of a (2+1)-dimensional sine-Gordon system \emph{Proc. R. Soc. Lond. A} \textbf{453} 255-279
\bibitem{N-1997} Nimmo J J C 1997 Darboux transformations and the discrete KP equation \emph{J. Phys. A: Math. Gen.} \textbf{30} 8693-704
\bibitem{SNZ-2014} Shi Y, Nimmo J J C and Zhang D J 2014 Darboux transformations for discrete systems I. Discrete potential KdV equation \emph{J. Phys. A: Math. Theor. } \textbf{47} 025205
\bibitem{SNZ-2017} Shi Y, Nimmo J J C and Zhao J X 2017 Darboux transformations for discrete systems II. Discrete Potential mKdV Equation \emph{SIGMA } \textbf{13} 036
\bibitem{WLG-1997}  Willox R, Loris I and Gilson C R 1997 Binary Darboux transformations for constrained KP hierarchies \emph{Inverse Problems} \textbf{13} 849-865
\bibitem{WTLS-1998} Willox R, Tokihiro T, Loris I and Satsuma J 1998 The fermionic approach to Darboux transformations \emph{Inverse Problems} \textbf{14} 745-762
\bibitem{XN-2011} Xenitidis P and Nijhoff F W 2012 Symmetries and conservation laws of lattice Boussinesq equations \emph{Phys. Lett. A} \textbf{376} 2394-2401
\bibitem{ZZN-2012} Zhang D J, Zhao S L and Nijhoff F W 2012 Direct linearization of extended Lattice BSQ systems \emph{Stud. Appl. Math.} \textbf{129} 220-248
\end{thebibliography}
\end{document}